\newtheorem{theorem}{Theorem}
\newtheorem{lemma}{Lemma}
\newtheorem{corollary}[lemma]{Corollary}
\newtheorem{remark}{Remark}
\newtheorem{definition}{Definition}
\newcommand{\namedref}[2]{\hyperref[#2]{#1~\ref*{#2}}}
\newcommand{\sectionref}[1]{\namedref{Section}{#1}}
\newcommand{\subsectionref}[1]{\namedref{Subsection}{#1}}
\newcommand{\theoremref}[1]{\namedref{Theorem}{#1}}
\newcommand{\defref}[1]{\namedref{Definition}{#1}}
\newcommand{\lemmaref}[1]{\namedref{Lemma}{#1}}
\newcommand{\E}{{\mathbb{E}}}
\newcommand{\N}{\mathbb{N}}
\newcommand{\dist}{{\rm dist}}
\newcommand{\polylog}{{\rm polylog}}
\newcommand{\poly}{{\rm poly}}
\newcommandx{\Rtodo}[2][1=]{\todo[linecolor=red,backgroundcolor=red!25,bordercolor=red,#1]{#2}}
\newcommand{\alert}[1]{\textbf{\color{red}
		[[[#1]]]}\marginpar{\textbf{\color{red}**}}\typeout{ALERT:
		\the\inputlineno: #1}}
\providecommand{\algorithmname}{Algorithm}
\begin{document}
	\title{On Notions of Distortion and an Almost Minimum Spanning Tree with Constant Average Distortion\footnote{Preliminary version of this paper was published in SODA'16 \cite{BFN16}.}}
	
	\author{Yair Bartal \thanks{School of Engineering and Computer Science, Hebrew University. Email: \texttt{yair@cs.huji.ac.il}. Supported in part by a grant from the Israeli Science Foundation (1817/17).} \\
\and Arnold Filtser \thanks{Department of Computer Science, Ben-Gurion University of the Negev. Email: \texttt{arnoldf@cs.bgu.ac.il}. Supported in part by ISF grant No. (1817/17) and by BSF grant No. 2015813.}\\
\and Ofer Neiman \thanks{Department of Computer Science, Ben-Gurion University of the Negev. Email: \texttt{neimano@cs.bgu.ac.il}. Supported in part by ISF grant No. (1817/17) and by BSF grant No. 2015813.}
	}

	\date{}
	\maketitle

	\begin{abstract}
		This paper makes two main contributions: The first is the construction of a {\sl near-minimum spanning tree} with {\sl constant average distortion}. The second is a general equivalence theorem relating two refined notions of distortion: {\sl scaling distortion} and {\sl prioritized distortion}.
		
		Minimum Spanning Trees of weighted graphs are fundamental objects in numerous applications. In particular in distributed networks, the minimum spanning tree of the network is often used to route messages between network nodes. Unfortunately, while being most efficient in the total cost of connecting all nodes, minimum spanning trees fail miserably in the desired property of approximately preserving distances between pairs.
		While known lower bounds exclude the possibility of the worst case distortion of a tree being small, it was shown in \cite{ABN15} that there exists a spanning tree with constant average distortion. Yet, the weight of such a tree may be significantly larger than that of the MST.
		In this paper, we show that any weighted undirected graph admits a {\em spanning tree} whose weight is at most $(1+\rho)$ times that of the MST, providing {\em constant average distortion} $O(1/\rho)$. Our result exhibits the best possible tradeoff of this type.

		This result makes use of a {\em general equivalence theorem} relating two recently developed notions of distortion for metric embedding.	
		The first is the notion of {\em scaling distortion}, which provides improved distortion for $1-\epsilon$ fractions of the pairs, for all $\epsilon$ simultaneously. A stronger version called {\em coarse} scaling distortion, has improved distortion guarantees for the furthest pairs.
		The second notion is that of {\em prioritized distortion}, a property allowing to prioritize the nodes whose associated distortions will be improved. We show that prioritized distortion is essentially equivalent to coarse scaling distortion via a general transformation.
		
		Our spanning tree result is achieved by first showing the existence of a low weight {\em spanner} with small prioritized distortion, which by our theorem implies scaling distortion, which in turn implies the constant average distortion bound.
		
		The scaling-prioritized distortion equivalence theorem has further implications and is of 
		independent interest. In particular, it allows us to resolve a few questions concerning prioritized embedding. Mainly, we obtain a strengthening of Bourgain's theorem on embedding arbitrary metrics into Euclidean space, possessing optimal prioritized distortion.
\end{abstract}

	\section{Introduction}
	
	One of the fundamental problems in graph theory is that of constructing a Minimum Spanning Tree (MST) of a given weighted graph $G=(V,E)$. This problem and its variants received much attention, and has found numerous applications. In many of these applications, one may desire not only minimizing the weight of the spanning tree, but also other desirable properties, at the price of losing a small factor in the weight of the tree compared to that of the MST. Define the {\em lightness} of $T$ to be the total weight of $T$ (the sum of its edge weights) divided by the weight of an MST.
	One well known example is that of a Shallow Light Tree (SLT) \cite{KRY93,ABP92}, which is a rooted spanning tree having near optimal $(1+\rho)$ lightness, while approximately preserving all distances from the root to the other vertices.
	
	It is natural to ask that the spanning tree will preserve well all pairwise distances in the graph. However, it is easy to see that no spanning tree can maintain such a requirement. In particular, even in the case of the unweighted cycle graph on $n$ vertices, for every spanning tree there is a pair of neighboring vertices whose distance increases by a factor of $n-1$. A natural relaxation of this demand is that the spanning tree approximates all pairwise distances {\em on average}. Formally, the distortion of the pair $u,v\in V$ in $T$ is defined as $\frac{d_T(u,v)}{d_G(u,v)}$, and the {\em average distortion} is $\frac{1}{{n\choose 2}}\sum_{\{u,v\}\in{V\choose 2}}\frac{d_T(u,v)}{d_G(u,v)}$, where $d_G$ (respectively $d_T$) is the shortest-path metric in $G$ (resp. $T$).\footnote{Distortion is sometimes referred to as stretch.}
	In \cite{ABN15}, it was shown that for every weighted graph, it is possible to find a spanning tree which has constant average distortion.
	
	In this paper, we devise a spanning tree of  optimal $(1+\rho)$ lightness that has $O(1/\rho)$ average distortion over all pairwise distances.
	We show that this result is tight by exhibiting a lower bound on the tradeoff between lightness and average distortion, that in order to get $1+\rho$ lightness the average distortion must be $\Omega(1/\rho)$ (this holds even if the spanning subgraph is not necessarily a tree), and in particular, the average distortion for an MST is as bad as $\Omega(n)$.

	Our main result of a light spanning tree with constant average distortion may be of interest for network applications. It is extremely common in the area of distributed computing that an MST is used for communication between the network nodes. This allows easy centralization of computing processes and an efficient way of broadcasting through the network, allowing communication to all nodes at a minimum cost. Yet, as already mentioned above, when communication is required between specific pairs of nodes, the cost of routing through the MST may be extremely high, even when their real distance is small.
	However, in practice it is the average distortion, rather than the worst-case distortion, that is often used as a practical measure of quality, as has been a major motivation behind the initial work of \cite{KSW04,ABN11,ABN15}. As noted above, the MST still fails even in this relaxed measure.
	Our result overcomes this by promising small routing cost between nodes on average, while still possessing the low cost of broadcasting through the tree, thereby maintaining the standard advantages of the MST.
	
	Our main result on a low average distortion embedding follows from analyzing the {\em scaling distortion} \cite{KSW04,ABN11} of the embedding. This refined notion of distortion turns out to be closely related to another useful measure of {\em prioritized distortion} \cite{EFN15}. The second main contribution of this paper is providing an equivalence theorem stating the relation between these useful notions.
	
\subsection{Scaling Distortion vs. Prioritized Distortion: A General Equivalence Theorem}
	{\sl Scaling distortion}, first introduced in \cite{KSW04}\footnote{Originally coined gracefully degrading embedding.}, requires that for {\em every} $0<\epsilon<1$, the distortion of all but an $\epsilon$-fraction of the pairs is bounded by the appropriate function of $\epsilon$. In \cite{ABN11} it was shown that one may obtain bounds on the average distortion, as well as on higher moments of the distortion function, from bounds on the scaling distortion. In \cite{ABN11} several scaling distortion results were shown including $O(\log(1/\epsilon))$ scaling distortion embedding into Euclidean space, and in \cite{ABN15} an $O(1/\sqrt{\epsilon})$ scaling embedding into trees, and spanning trees in particular.
	
	{\sl Prioritized distortion}, introduced recently in \cite{EFN15}, requires that for every given ranking $v_1,\dots,v_n$ of the vertices of the graph, there is an embedding where the distortion of pairs including $v_j$ is bounded as a function of the rank $j$. Several prioritized distortion results were given in \cite{EFN15}, including $\tilde{O}(\log j)$ \footnote{By $\tilde{O}(f(n))$ we mean $O(f(n)\cdot{\rm polylog}(f(n)))$.} prioritized distortion embedding into Euclidean space.
	
	One of the main ingredients of our work is a {\em general reduction}
	relating the notions of prioritized distortion and scaling distortion. In fact, we show that prioritized distortion is essentially equivalent to a strong version of scaling distortion called {\em coarse} scaling distortion, in which for every point $p$ and every $0<\epsilon<1$, the distances to the  $1-\epsilon$ fraction of the farthest points from $p$ are preserved with the desired distortion. We prove that
any embedding with a prioritized distortion $\alpha$
has coarse scaling distortion bounded by $O(\alpha(8/\epsilon))$.
This result could be of independent interest; in particular, it shows that the results of \cite{EFN15}
with coarse scaling distortion $\gamma$, there exists an embedding with prioritized distortion $\gamma(\mu(j))$, where $\mu$ is a function such that $\sum_i \mu(i) = 1$ (e.g. $\mu(j)= \frac{6}{(\pi\cdot j)^2}$).
	We note that this reduction heavily relied on the property of coarse scaling distortion embeddings and does not apply to non-coarse scaling embeddings. Yet, most existing scaling embeddings are indeed coarse.
	This result implies  that all existing {\em coarse} scaling distortion results  
	have priority distortion counterparts, thus improving few of the results of \cite{EFN15}.
	In particular, by applying a theorem of \cite{ABN11} we obtain prioritized embedding of arbitrary metric spaces into $l_p$ in dimension $O(\log n)$ and prioritized distortion $O(\log j)$, which exhibits a strengthening of Bourgain's theorem, and is best possible. It also implies better bounds for decomposable metrics (see \cite{ABN11}), such as planar and doubling metrics, where we obtain an optimal $O(\sqrt{\log j})$ prioritized distortion.

We also show an equivalence between embeddings with coarse partial distortion and terminal embeddings, which can be used to extend and improve previous results. See \sectionref{sec:terminal-partial} for details.
		
	In the context of our main construction of a light spanning tree, the first direction of the above equivalence theorem allows us to devise prioritized distortion embeddings and use these to obtain scaling distortion embeddings which possess the desired constant average distortion.

\subsection{Light Spanning tree of Constant Average Distortion.}

	\sloppy Our main spanning tree construction provides a light spanning tree with scaling distortion bound of $\tilde{O}(1/\sqrt{\epsilon})/\rho$, which is  nearly tight as a function of $\epsilon$ \cite{ABN15}. This result implies that the average distortion is $O(1/\rho)$.
	
	We also devise a probabilistic embedding: a distribution over (light) spanning trees with $\polylog(1/\epsilon)/\rho$ scaling distortion, thus providing constant bounds on all fixed moments of the distortion (i.e., the $l_q$-distortion \cite{ABN11} for fixed $q$).
	
	Our main technical contribution, en route to this result, may be of its own interest: We devise a {\em spanner} (a subgraph of $G$) with $1+\rho$ lightness and low {\em prioritized distortion}.
	Here we show a light spanner construction with prioritized distortion at most $\tilde{O}(\log j)/\rho$.
	Using the equivalence theorem relating prioritized distortion and scaling distortion (discussed above), we obtain a spanner having scaling distortion $\tilde{O}(\log (1/\epsilon))/\rho$, and thus average distortion $O(1/\rho)$. Although we do not obtain a spanning tree here, this result has a few advantages, as we get constant bounds on all fixed moments of the distortion function (the $\ell_q$-distortion). Moreover, the worst-case distortion is only logarithmic in $n$. We note that all of our results admit deterministic polynomial time algorithms.

	Another technical contribution is a general, black-box reduction, that transform constructions of spanners with distortion $t$ and lightness $\ell$ into spanners with distortion $t/\delta$ and lightness $1+\delta\ell$ (here $0<\delta<1$). This reduction can be applied in numerous settings, and also for many different special families of graphs. In particular, this reduction allows us to construct prioritized spanners with lightness arbitrarily close to $1$.

	\paragraph{Outline and Techniques.}
	Our proof has the following high level approach; Given a graph and a ranking of its vertices, we first find a low weight spanner with prioritized distortion $\tilde{O}(\log j)/\rho$. We then apply the general reduction from prioritized distortion to scaling distortion to find a spanner with scaling distortion $\tilde{O}(\log (1/\epsilon))/\rho$. Finally, we use the result of \cite{ABN15} to find a spanning tree of this spanner with scaling distortion $O(1/\sqrt{\epsilon})$. We then conclude that the scaling distortion of the concatenated embeddings is roughly their product, which implies our main result of a spanning tree with lightness $1+\rho$ and scaling distortion $\tilde{O}(1/\sqrt{\epsilon})/\rho$.
	
	Similarly, we can apply the probabilistic embedding of \cite{ABN15} to get a light counterpart, devising a distribution over spanning trees, each with lightness $1+\rho$, with (expected) scaling distortion $\polylog(1/\epsilon)/\rho$.
	
	The main technical part of the paper is finding a light prioritized spanner. In a recent result \cite{CW16} (following \cite{ENS14,CDNS92}), it was shown that any graph on $n$ vertices admits a spanner with (worst-case) distortion $O(\log n)$ and with constant lightness.
	However, these constructions have no bound on the more refined notions of distortion. To obtain a prioritized distortion, we use a technique similar in spirit to \cite{EFN15}: group the vertices into $\log\log n$ sets according to their priority, the set $K_i$ will contain vertices with priority up to $2^{2^i}$. We then build a low weight spanner for each of these sets. As prioritized distortion guarantees a bound for {\em every pair} containing a high ranking vertex, we must augment the spanner of $K_i$ with shortest paths to all other vertices. Such a shortest path tree may have large weight, so we use an idea from \cite{CDG06} and apply an SLT rooted at $K_i$, which balances between the weight and the distortion from $K_i$.
	
	The main issue with the construction described above is that the weight of the spanner in each phase can be proportional to that of the MST, but we have $\log\log n$ of those. Obtaining constant lightness, completely independent of $n$, requires a subtler argument. We use the fact that the weight of the
	light spanners in each phase come "mostly" from the MST, and then some additional weight. We ensure that all the spanners will have {\em the same} MST.
	Then we select the parameters carefully, so that the additional weights will be small enough to form converging sequences, without affecting the distortion by too much.

	\subsection{Related Work}
	
	Partial and scaling embeddings\footnote{A partial embedding (introduced by \cite{KSW04} under the name {\em embedding with slack}) requires that for a {\em fixed} $0<\epsilon<1$, the distortion of all but an $\epsilon$-fraction of the pairs is bounded by the appropriate function of $\epsilon$.} have been studied in several papers \cite{KSW04,ABCD05,ABN11,CDG06,ABN15,ABNS14}. Some of the notable results are embedding arbitrary metrics into a distribution over trees \cite{ABCD05} or into Euclidean space $\cite{ABN11}$ with tight $O(\log(1/\epsilon))$ scaling distortion.
The notion $\ell_q$-distortion was introduced in \cite{ABN11}, they show that their scaling distortion results imply constant average distortion and $O(q)$ bound on the $\ell_q$-distortion. This notion has been further studied in several papers, including \cite{ABN15,ABNS14,CDG06}, and most recently applied in the context of dimensionality reduction~\cite{BFN17}. In \cite{ABN15}, an embedding into a single spanning tree with tight $O(1/\sqrt{\epsilon})$ scaling distortion was shown, which implies, in particular, constant average distortion, but there is no guarantee on the weight of the tree. It follows from \cite{ABCD05} that this bound is tight even when embedding into arbitrary (non-spanning) trees.
	
	Prioritized distortion embeddings were studied in \cite{EFN15}, for instance they give an embedding of arbitrary metrics into a distribution over trees with expected prioritized distortion $O(\log j)$, and into Euclidean space with prioritized distortion $\tilde{O}(\log j)$.
	
	Probabilistic embedding into trees \cite{bartal1,bartal2,bartal3,FRT03} and spanning trees \cite{AKPW95,EEST05,ABN08,AN12} has been intensively studied, and found numerous applications to approximation and online algorithms, and to fast linear system solvers. While our distortion guarantee does not match the best known worst-case bounds, which are $O(\log n)$ for arbitrary trees \cite{bartal3,FRT03} and $\tilde{O}(\log n)$ for spanning trees \cite{ABN08,AN12}, we give the first probabilistic embeddings into spanning trees with polylogarithmic scaling distortion in which all the spanning trees in the support of the distribution are light.

	The paper \cite{CDG06} considers partial and scaling embedding into spanners, and show a general transformation from worst-case distortion to partial and scaling distortion. In particular, they show a spanner with $O(n)$ edges and $O(\log(1/\epsilon))$ scaling distortion. For a fixed $\epsilon>0$, they also obtain a spanner with $O(n)$ edges, $O(\log(1/\epsilon))$ partial distortion and lightness $O(\log(1/\epsilon))$.\footnote{The original paper claims lightness $O(\log^2 (1/\epsilon))$, but their proof in fact gives the improved bound.} Note that these results fall short of achieving both constant average distortion and constant lightness.
	
In a subsequent work, \cite{FS16} used our general reduction for light spanners (\theoremref{thm:reduction_lightness}), to show that the $O(\log n/\rho)$-greedy spanner has lightness $1+\rho$.

	\section{Preliminaries}\label{sec:pre}
	
	All the graphs $G=(V,E,w)$ we consider are undirected and weighted with nonnegative weights. We shall assume w.l.o.g that all edge weights are different. If it is not the case, then one can break ties in an arbitrary (but consistent) way. Note that under this assumption, the MST $T$ of $G$ is unique. The weight of a graph $G$ is $w(G)=\sum_{e\in E}w(e)$.
	Let $d_G$ be the shortest path metric on $G$. For a subset $K\subseteq V$ and $v\in V$ let $d_G(v,K)=\min_{u\in K}\{ d_G(u,v)\}$. For $r\ge 0$ let $B_G(v,r)=\{u\in V: d_G(u,v)\le r\}$ (we often omit the subscript when clear from context).
	
	For a graph $G=(V,E)$ on $n$ vertices, a subgraph $H=(V,E')$ where $E'\subseteq E$ (with the induced weights) is called a \emph{spanner} of $G$. We say that a pair $u,v\in V$ has \emph{distortion} at most $t$ if
	\[
	d_H(v,u)\le t\cdot d_G(v,u)~,
	\]
	(note that always $d_G(v,u)\le  d_H(v,u)$).
	If every pair $u,v\in V$ has distortion at most $t$, we say that the spanner $H$ has distortion $t$.
	Let $T$ be the (unique) MST of $G$, the \emph{lightness} of $H$ is the ratio between the weight of $H$ and the weight of the MST, that is $\Psi(H)=\frac{w(H)}{w(T)}$. We sometimes abuse notation and identify a spanner or a spanning tree with its set of edges.

	A metric space $(X,d_X)$ is defined over a set of points $X$ and a nonnegative distance function $d_X$, with positive values on distinct points, and obeying the triangle inequality. Every weighted graph $G$ can be viewed as a metric space $(V,d_G)$.
	For two metric spaces $(X,d_X)$, $(Y,d_Y)$ and a non-contractive embedding $f:X\to Y$,\footnote{An embedding $f$ is non-contractive if for every $x,y\in X$, $d_Y(f(x),f(y))\ge d_X(x,y)$.} the distortion of a pair $x,y\in X$ under $f$ is defined as $\frac{d_Y(f(x),f(y))}{d_X(x,y)}$.
	
	When considering a graph $G$ and its subgraph $H$, we may view the metric of $G$ as being embedded into $H$ via the identity map, in which case the last definition of distortion given above coincides with the those given earlier. Hence, the following definitions may be interpreted in the graph case in the obvious way.

	\paragraph{Prioritized Distortion.}
	Let $(X,d_X), (Y,d_Y)$ be metric spaces. Let $\pi=v_1,\dots,v_n$ be a priority ranking (an ordering) of the points (vertices) of $X$, and let  $\alpha:\mathbb{N}\rightarrow\mathbb{R}_+$ be some monotone non-decreasing function.
	We say that a non-contractive embedding $f:X\to Y$ has  {\em prioritized distortion} $\alpha$ (w.r.t $\pi$), if for all $1\le j<i\le n$, the pair ${v_j,v_i}$ has distortion  (under $f$) at most $\alpha(j)$.
	
	\paragraph{Scaling Distortion.} Let $(X,d_X), (Y,d_Y)$ be metric spaces, with $|X|=n$. For $v\in X$ and $\epsilon\in\left(0,1\right)$ let $R(v,\epsilon)=\min\left\{ r:|B(v,r)|\ge\epsilon n\right\}$
.
	A point $u \in X$ is called $\epsilon$-{\em far} from $v$ if $d_X(u,v)\ge R(v,\epsilon)$.
	Given a function $\gamma:\left(0,1\right)\rightarrow\mathbb{R}_{+}$, we say that a non-contractive embedding $f:X\to Y$ has \emph{scaling distortion} $\gamma$, if for every $\epsilon\in\left(0,1\right)$, there are at least $(1-\epsilon){|X|\choose 2}$ pairs that have distortion at most $\gamma(\epsilon)$.
	We say that $f$ has \emph{coarse} scaling distortion $\gamma$, if
	every pair $v,u\in X$ such that both $u,v$ are $\epsilon/2$-far from each other, has distortion at most $\gamma(\epsilon)$.
	\footnote{It can be verified that coarse scaling distortion $\gamma$ implies scaling distortion $\gamma$.}

	\paragraph{Moments of Distortion.}
	Let $(X,d_X), (Y,d_Y)$ be metric spaces. For $1 \leq q \leq \infty$, define the \emph{$\ell_q$-distortion} of
	a non-contractive embedding $f:X\to Y$ as:
	$$
	\dist_q(f) = \E\left[\left(\frac{d_{Y}(f(u),f(v))}{d_{G}(u,v)}\right)^{q}\right]^{1/q} ,
	$$
	where the expectation is taken according
	to the uniform distribution over ${X\choose 2}$.
	The
	classic notion of \emph{distortion} is expressed by the
	$\ell_\infty$-distortion and the \emph{average distortion} is
	expressed by the $\ell_1$-distortion. The following was proved in \cite{ABN15}.
	
	\begin{lemma}\label{lemma:scaling-q-norm}(\cite{ABN15})
		Let $(X,d_X), (Y,d_Y)$ be metric spaces. If a non-contractive embedding $f:X\to Y$ has scaling distortion $\gamma$ then
		\begin{equation*}
		\dist _q(f) \leq \left( 2\int_{\frac{1}{2} {n\choose
				2}^{-1}}^{1} \gamma(x)^q dx
		\right)^{1/q} ~ .
		\end{equation*}
	\end{lemma}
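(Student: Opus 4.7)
Let $N := \binom{n}{2}$ and sort the distortions of the $N$ pairs in decreasing order, $D_1 \geq D_2 \geq \cdots \geq D_N$, so that $\dist_q(H,G)^q = \frac{1}{N}\sum_{k=1}^N D_k^q$. The plan is to pointwise bound each $D_k$ using scaling distortion and then convert the resulting sum into an integral of $\gamma^q$.

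I first extract from scaling distortion the following: for every $\epsilon \in (0,1)$, at most $\lfloor \epsilon N \rfloor$ pairs can have distortion exceeding $\gamma(\epsilon)$, so $D_k \leq \gamma(\epsilon)$ whenever $\epsilon < k/N$. The crucial feature I would exploit is that this bound holds on an entire \emph{interval} of $\epsilon$-values, not merely at a point.

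Next, raising the estimate to the $q$-th power and integrating in $\epsilon$ turns the pointwise bound into an integral inequality. For $k \geq 2$, integrating $D_k^q \leq \gamma(\epsilon)^q$ over $\epsilon \in ((k-1)/N, k/N)$ gives $\frac{1}{N}D_k^q \leq \int_{(k-1)/N}^{k/N}\gamma(\epsilon)^q d\epsilon$. For $k = 1$ the admissible range is $(0, 1/N)$, and integrating only over the subinterval $[1/(2N), 1/N]$ (length $1/(2N)$) yields $\frac{1}{N}D_1^q \leq 2\int_{1/(2N)}^{1/N}\gamma(\epsilon)^q d\epsilon$. Summing over $k$, the intervals telescope to
\[
\dist_q(H,G)^q \leq 2\int_{1/(2N)}^{1/N}\gamma(\epsilon)^q d\epsilon + \int_{1/N}^{1}\gamma(\epsilon)^q d\epsilon \leq 2\int_{1/(2N)}^{1}\gamma(\epsilon)^q d\epsilon,
\]
and taking $q$-th roots gives the stated inequality.

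The main subtlety is the treatment of the worst pair $D_1$: since scaling distortion is only guaranteed for $\epsilon > 0$ and $\gamma(\epsilon)$ may diverge as $\epsilon \to 0$, the naive point evaluation $D_1 \leq \gamma(1/(2N))$ would yield $\frac{1}{N}\gamma(1/(2N))^q$, which can exceed the available budget $2\int_{1/(2N)}^{1/N}\gamma^q$ (that integral is at most $\frac{1}{N}\gamma(1/(2N))^q$). Keeping $D_1 \leq \gamma(\epsilon)$ as a pointwise bound on the entire interval $(0,1/N)$ and averaging over the subinterval $[1/(2N),1/N]$, rather than picking one value, is what makes the calculation close and accounts for both the lower limit $1/(2N)$ and the leading factor $2$ in the statement.
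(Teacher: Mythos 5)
Your proof is correct: the key deduction that $D_k\le\gamma(\epsilon)$ for every $\epsilon<k/N$ (else the $k$ pairs $D_1,\dots,D_k$ would all exceed $\gamma(\epsilon)$, contradicting that at most $\lfloor\epsilon N\rfloor$ pairs may) is sound, and integrating this bound over $((k-1)/N,k/N)$ for $k\ge 2$ and over $[\frac{1}{2N},\frac1N]$ for $k=1$ telescopes exactly to $2\int_{\frac12{n\choose 2}^{-1}}^{1}\gamma(x)^q\,dx$. The paper itself only cites this lemma from \cite{ABN15} rather than proving it, and your argument is essentially the standard one (sorting the pair distortions and comparing the sum to the integral), with the interval-averaging treatment of the worst pair being a clean way to obtain the factor $2$ and the lower limit without even assuming $\gamma$ is monotone.
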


		\section{Prioritized Distortion vs. Coarse Scaling Distortion}
	
	In this section we study the relationship between the notions of prioritized and scaling distortion.  We show that there is a reduction that allows to transform embeddings with prioritized distortion into embeddings with coarse scaling distortion, and vice versa.
	
\subsection{Coarse Scaling Distortion implies Prioritized Distortion}
	
	The following theorem shows that coarse scaling distortion implies prioritized distortion, implying some new prioritized distortion embedding results, and in particular a prioritized version of Bourgain's theorem.
	
	\begin{theorem}\label{thm:coarse-scaling to priority}
		Let $\mu:\N\to \mathbb{R}^+$ be a non-increasing function such that $\sum_{i \geq 1} \mu(i) = 1$.
		Let ${\cal Y}$ be a family of finite metric spaces, and assume that for every finite metric space $(Z,d_Z)$ there exists a non-contractive embedding $f_Z:Z\to Y_Z$, where $(Y_Z,d_{Y_Z}) \in {\cal Y}$, with (monotone non-increasing) coarse scaling distortion $\gamma$. Then, given a finite metric space $(X,d_X)$ and a priority ranking $x_1,\dots,x_n$ of the points of $X$, there exists an embedding $f:X\to Y$, for some $(Y,d_Y) \in {\cal Y}$, with prioritized distortion $\gamma(\mu(i))$.
	\end{theorem}
	
	\begin{proof}
		Given the metric space $(X,d_X)$ and a priority ranking $x_1,\dots,x_n$ of the points of $X$, let $\delta = \min_{i \neq j} d_X(x_i,x_j)/2$. We define a new metric space $(Z,d_Z)$ as follows. For every $1 \leq i \leq n$, every point $x_i$ is replaced by a set $X_i$ of $|X_i| = \lceil \mu(i) n \rceil$ points, and let $Z=\bigcup_{i=1}^n X_i$. For every $u \in X_i$ and $v \in X_j$ define $d_Z(u,v) = d_X(x_i,x_j)$ when $i \neq j$, and $d_Z(u,v) = \delta$ otherwise.
		Observe that $|Z| = \sum_{i=1}^n |X_i| \leq \sum_{i=1}^n (\mu(i) n + 1) \le 2n$.
		
		We now use the embedding $f_Z:Z\to Y_Z$ with coarse scaling distortion $\gamma$, to define an embedding $f:X\to Y_Z$, by letting for every $1 \leq i \leq n$, $f(x_i) = f_Z(u_i)$ for some (arbitrary) point $u_i \in X_i$.
		By construction of $Z$, for every $j> i$, we have that $X_i\subseteq B(u_i,d_Z(u_i,u_j))\cap B(u_j,d_Z(u_i,u_j))$. As $|X_i|\ge \mu(i)n\ge  \frac{\mu(i)}{2}|Z|$, it holds that $u_i,u_j$ are $\epsilon/2$-far from each other for $\epsilon=\mu(i)$.
		This implies that $\frac{d_{Y_Z}(f(x_i),f(x_j))}{d_X(x_i,x_j)} = \frac{d_{Y_Z}(f_Z(u_i),f_Z(u_j))}{d_Z(u_i,u_j)} \leq \gamma(\mu(i))$.
	\end{proof}
	
	It follows from a result of \cite{EFN15} that the convergence condition on $\mu$ in the above theorem is necessary (more details below).
	We note that this reduction can also be applied to cases where the coarse scaling embedding is only known for a class of metric spaces (rather than all metrics), as long as the transformation needed for the proof can be made so that the resulting new space is still in the class. This holds for most natural classes, such as metrics arising from trees, planar graph, graphs excluding a fixed minor, bounded degree graphs, doubling metrics, etc.
	\footnote{For the graph classes mention above (as well as for doubling metrics), a small change in the construction is needed. From each original vertex $x_i$, we will grow a path $X_i$ of $\lceil \mu(i) n \rceil$ vertices, where all the path edges have weight $\frac{\delta \cdot \alpha}{2n}$ for arbitrarily small $\alpha>0$. We will also need to choose $u_i$ to be the single leaf in the added path (rather then simply arbitrarily chosen vertex). The same proof will guarantee a $(1+\alpha)\gamma(\mu(i))$ prioritized distortion.}

\subsubsection{Implications}

	The reduction implies that all existing {\em coarse} scaling distortion embeddings and distance oracles
have priority distortion counterparts, thus improving few of the results of \cite{EFN15}.
\paragraph{Embeddings.} By applying a theorem of \cite{ABN11} we get the following.
	
	\begin{corollary} For every $1 \leq p \leq \infty$ and every finite metric space $(X,d_X)$ and a priority ranking of $X$, there exists an embedding with prioritized distortion $O(\log j)$ into $l_p^{O(\log |X|)}$.
	\end{corollary}
	
	Another consequence of the results of \cite{ABN11} is better bounds for decomposable metrics\footnote{Roughly, a metric space is called $\tau$-decomposable if it allows probabilistic partitions with padding parameter $\tau$; e.g. Planar metrics and doubling metrics. An exact definition appears in \cite{ABN11}.}:
	
		\begin{corollary} For every $1 \leq p \leq \infty$ and every finite $\tau$-decomposable metric space $(X,d_X)$ and a priority ranking of $X$, there exists an embedding with prioritized distortion $O(\tau^{1-1/p}(\log j)^{1/p})$ into $l_p^{O(\log^2 |X|)}$.
	\end{corollary}

\paragraph{Spanners.}	
	Applying \theoremref{thm:coarse-scaling to priority} on \cite[Corollary 3]{CDG06} we get a linear size prioritized spanner.
	\begin{corollary}
		Given a graph $G=\left(V,E\right)$ and any priority ranking $v_{1},v_{2},\dots,v_{n}$ of $V$, there exists a spanner $H$
		with $O(n)$ edges and prioritized distortion $O\left(\log j\right)$.
	\end{corollary}

We remark that in \theoremref{thm:Prioritized-Spanner} we show directly a spanner with $O(n\log\log n)$ edges and prioritized distortion $\tilde{O}(\log j)$ (which could easily be made $O(\log j)$). While not being of linear size, that spanner is very light. We currently do not know how to achieve both lightness and linear size spanner with prioritized distortion $O(\log j)$.

\paragraph{Distance Oracles.} In \cite{EFN15}, among other possible tradeoffs, it was shown how to construct distance oracles with $O(n\log\log n)$ space and prioritized distortion $O(\log n/\log(n/j))$ with $O(1)$ query time. (Alternatively, they had $O(n\log^*n)$ space and $O(2^{\log n/\log(n/j)})$ prioritized stretch with $O(1)$ query time.) The space requirement in \cite{EFN15} was never truly linear in $n$.	
Chechick \cite{C15} showed that for every metric space $(X,d_X)$, one can construct a distance oracle with $O(\log n)$-stretch, $O(1)$-query time and $O(n)$ space. A black box reduction from \cite{CDG06}, will provide us with distance oracle with $O(\log\frac1\epsilon)$ coarse scaling distortion, $O(1)$-query time and $O(n)$ space. We conclude with a linear size prioritized distance oracle.
	\begin{corollary}
		For every metric space  $(X,d_x)$ and every priority ranking, there exist a distance oracle requiring $O(n)$ space, that answer distance queries in $O(1)$ time and $O(\log j)$ priority distortion.		
	\end{corollary}
We remark that the prioritized distortion $O(\log n/\log(n/j))$ of \cite{EFN15} is superior to our $O(\log j)$.

	\subsection{Prioritized Distortion implies Coarse Scaling Distortion}		
	
	Here we prove the direction that is used for our main result of a light constant average distortion spanning tree, specifically, that prioritized distortion implies scaling distortion.

	\begin{theorem}\label{thm:priority to scaling}
		Let $(X,d_X)$, $(Y,d_Y)$ be metric spaces, then there exists a priority ranking $\pi=x_1,\dots,x_n$ of the points of $X$ such that the following holds: If there exists an embedding $f:X\to Y$ with (monotone non-decreasing) prioritized distortion $\alpha$ (with respect to $\pi$), then $f$ has coarse scaling distortion $O(\alpha(8/\epsilon))$.
	\end{theorem}
	
	The basic idea of the proof is to choose the priorities so that for every $\epsilon$, every $v\in X$ has a representative $v'$ of sufficiently high priority within distance $\approx R(v,\epsilon)$. Then for any $u\in X$ which is $\epsilon$-far from $v$, we can use the low distortion guarantee of $v'$ with both $v$ and $u$ via the triangle inequality. To this end, we employ the notion of a {\em density net} due to \cite{CDG06}, who showed that a greedy construction provides such a net.
	
	\begin{definition}[Density Net]\label{def:den}
		Given a metric space $(X,d)$
		and a parameter $0<\epsilon<1$, an $\epsilon$\emph{-density-net} is a set $N\subseteq X$ such that: \textbf{1)} for all $v\in X$ there exists $u\in N$ with $d(v,u)\le 2R(v,\epsilon)$ and \textbf{2)} $\left|N\right|\le\frac{1}{\epsilon}$.
	\end{definition}

	\begin{proof}[Proof of \theoremref{thm:priority to scaling}]
		We begin by describing $\pi$, the desired priority ranking of $X$. For every integer $1\le i\le\lceil\log n\rceil$ let $\epsilon_{i}=2^{-i}$, and let $N_i\subseteq X$
		be an $\epsilon_i$-density-net in $X$. Set $\pi$ to be a priority ranking of $X$ satisfying that every point $v\in N_i$ has priority at most $\left|\bigcup_{j=1}^{i}N_j\right|\le\sum_{j=1}^{i}\left|N_j\right|$. As for any $j$,
		$\left|N_j\right|\le\frac{1}{\epsilon_{j}}=2^j$, each point in $N_i$ has priority at most $\sum_{j=1}^{i}\frac{1}{\epsilon_{j}}\le\sum_{j=1}^{i}2^j<2^{i+1}$.
		
		Let $f:X\rightarrow Y$ be some non-contractive embedding with priority distortion $\alpha$ with respect to $\pi$. Fix some
		$\epsilon\in(0,1)$ and a pair
		$v,u\in V$ so that $u$ is $\epsilon$-far from $v$.
		Let $i$ be the minimal integer such that
		$\epsilon_{i}\le\epsilon$ (note that we may assume $1\le i\le\lceil\log n\rceil$, because there is nothing to prove for $\epsilon<1/n$). By \defref{def:den} we can take $v'\in N_i$ such that $d(v,v')\le 2R(v,\epsilon_i)$.
		As $u$ is $\epsilon$-far from $v$, it holds that
		\begin{equation}\label{eq:dd}
		d_{X}(v,v')\le 2R\left(v,\epsilon_{i}\right)\le2R\left(v,\epsilon\right)\le2 d_{X}(v,u)~.
		\end{equation}
		In particular, by the triangle inequality,
		\begin{equation}\label{eq:ss}
		d_{X}(u,v')\le d_{X}(u,v)+d_{X}(v,v')\stackrel{\eqref{eq:dd}}{\le} 3d_{X}(u,v)\ .
		\end{equation}
		The priority of $v'$ is at most $2^{i+1}$, hence  		
		\begin{eqnarray*}
			\lefteqn{d_{Y}(f(v),f(u))}\\
			&\le&d_{Y}(f(v),f(v'))+d_{Y}(f(v'),f(u))\\
			&\le&\alpha(2^{i+1})\cdot d_{X}(v,v')+\alpha(2^{i+1})\cdot d_{X}(v',u)\\
			&\stackrel{\eqref{eq:dd}\wedge\eqref{eq:ss}}{\le}&5\alpha(2/\epsilon_{i})\cdot d_{X}(v,u)~.
		\end{eqnarray*}
		
		By the minimality of $i$ it follows that $1/\epsilon_i\le 2/\epsilon$, and since $\alpha$ is monotone
		
		\[
		d_{Y}(f(v),f(u))\leq5\alpha(2/\epsilon_{i})\cdot d_{X}(v,u)\le5\alpha(4/\epsilon)\cdot d_{X}(v,u)\ ,
		\]
		as required. Since we desire distortion guarantee for pairs that are $\epsilon/2$-far, the distortion becomes $O(\alpha(8/\epsilon))$.

	\end{proof}

\begin{remark}
		The proof of \theoremref{thm:priority to scaling} provides an even stronger conclusion, that any pair $u,v\in X$ such that one is $\epsilon/2$-far from the other, has the claimed distortion bound. While in the original definition of coarse scaling both points are required to be $\epsilon/2$-far from each other, it is often the case that we achieve the stronger property. Yet, in some of the cases in previous work the weaker definition seemed to be of importance. Combining \theoremref{thm:priority to scaling} and \theoremref{thm:coarse-scaling to priority}, we infer that essentially any coarse scaling embedding can have such a one-sided guarantee, with a slightly worse dependence on $\epsilon$, as claimed in the following corollary.
	\end{remark}

	\begin{corollary}\label{cor:weak_to_strong_scaling}
		Let $\mu:\N\to \mathbb{R}^+$ be a non-increasing function such that $\sum_{i \geq 1} \mu(i) = 1$.
		Let ${\cal Y}$ be a family of finite metric spaces, and assume that for every finite metric space $(Z,d_Z)$ there exists a non-contractive embedding $f_Z:Z\to Y_Z$, where $(Y_Z,d_{Y_Z}) \in {\cal Y}$, with (monotone non-increasing) coarse scaling distortion $\gamma(\epsilon)$. Then given any finite metric space $X$, there exists an embedding $f:X\to Y$, for some $(Y,d_Y) \in {\cal Y}$, with (monotone non-decreasing) {\em one-sided} coarse scaling distortion $O(\gamma(\mu(8/\epsilon)))$.
	\end{corollary}
	\begin{proof}
		By the assumption, there exists $(Y,d_Y)\in {\cal Y}$ so that $X$ embeds to $Y$ with coarse scaling distortion $\gamma(\epsilon)$.
		According to \theoremref{thm:coarse-scaling to priority}, there is an embedding $f$ with prioritized distortion $\gamma(\mu(i)))$ (w.r.t to any fixed priority ranking $\pi$).
		We pick $\pi$ to be the ordering required by \theoremref{thm:priority to scaling}, and conclude that $f$ has one-sided coarse scaling distortion $O(\gamma(\mu(8/\epsilon)))$.
	\end{proof}

\subsection{Coarse Partial Distortion and Terminal Distortion}\label{sec:terminal-partial}

	As a special case of the reductions \theoremref{thm:coarse-scaling to priority} and \theoremref{thm:priority to scaling}, we can prove an equivalence between coarse partial distortion to terminal distortion.
	\begin{definition}[Coarse partial distortion]
		Let $(X,d_X), (Y,d_Y)$ be metric spaces, and let $\epsilon\in(0,1)$, $\gamma\ge 1$.
		A non-contractive embedding $f:X\to Y$ has $(1-\epsilon)$-coarse partial scaling distortion $\gamma$,  if every pair $v,u\in X$ such that both $u,v$ are $\epsilon/2$-far from each other, has distortion at most $\gamma$.			
	\end{definition}
	Note that the embedding $f$ has coarse scaling distortion $\gamma$ if and only if for every $\epsilon\in(0,1)$, $f$ has $(1-\epsilon)$-coarse partial distortion $\gamma(\epsilon)$.
	\begin{definition}[Terminal distortion]\label{def:terminalDist}
		Let $(X,d_X), (Y,d_Y)$ be metric spaces, and $K\subseteq X$ a subset of terminals.
		A non-contractive embedding $f:X\to Y$ has terminal distortion $\alpha$ w.r.t. $K$,  if every pair $(v,u)\in K\times X$ has distortion at most $\alpha$.			
	\end{definition}
	Note that for a priority ranking $\pi=x_1,\dots,x_n$, the embedding $f$ has priority distortion $\alpha$ w.r.t $\pi$ if and only if for every $k$, $f$ has terminal distortion $\alpha(k)$ w.r.t. $K=\{x_1,\dots,x_k\}$.
	It is important to note that \defref{def:terminalDist} differs from the original definition of terminal distortion in \cite{EFN15T}, which did not require $f$ to be non-contractive on {\em all pairs}. We elaborate on this issue in \subsectionref{sec:weak}.

	\begin{theorem}\label{thm:coarse-partial to terminal}
	Let $\mu:\N\to \mathbb{R}^+$ be a non-increasing function such that $\sum_{i \geq 1} \mu(i) = 1$. Let $k \in \mathbb{N}$.
	Let ${\cal Y}$ be a family of finite metric spaces, and assume that for every finite metric space $(Z,d_Z)$ there exists an embedding $\phi:Z\to Y_Z$, where $(Y_Z,d_{Y_Z}) \in {\cal Y}$, with coarse $(1-1/(2k))$-partial distortion $\gamma$. Then, given a finite metric space $(X,d_X)$ and a set of terminals $K\subsetneq X$ of size $|K|=k$, there exists an embedding $f:X\to Y$, for some $(Y,d_Y) \in {\cal Y}$, with terminal distortion $\gamma$.
\end{theorem}
\begin{proof}
	Simply follow the proof of \theoremref{thm:coarse-scaling to priority}, using $\mu(x) = \frac{1}{2k}$ for $x\in K$, and $\mu(x) = \frac{1}{2(|X|-k)}$ for $x\in X\setminus K$.
	As every $x\in K$ has $\frac{n}{2k}$ copies, and the new metric $Z$ contains at most $2n$ points,
	$x$ is $\frac{1}{4k}$-far from any other $y\in X$ (in the metric space $Z$). Also this $y$ is $\frac{1}{4k}$-far from $x$ (since $|B_Z(y,d(x,y))|\ge n/(2k)$). Thus
the embedding with coarse $(1-1/(2k))$-partial distortion for $Z$ has distortion at most $\gamma$ for such a pair $x,y$.
\end{proof}

	\begin{theorem}\label{thm:terminal to coarse-partial}
		Let $(X,d_X)$, $(Y,d_Y)$ be metric spaces, and $k\in\mathbb{N}$ a parameter. There exists a subset $K\subseteq X$ of size $k$, such that the following holds: If there exists an embedding $f:X\to Y$ with terminal distortion $\alpha$, then $f$ has $(1-\frac2k)$-coarse scaling distortion $5\alpha$.
	\end{theorem}
	\begin{proof}
		Following the lines of the proof of \theoremref{thm:priority to scaling} let $K$ be a $\frac1k$-density net.
		Fix a pair $v,u\in V$ so that $u$ is $\frac12\cdot\frac2k=\frac1k$-far from $v$.
		Let $v'\in N$ such that $d_X(v,v')\le 2R(v,\frac1k)\le2  d_{X}(v,u)$. It holds that,
		\begin{align*}
		d_{Y}(f(v),f(u)) & \le d_{Y}(f(v),f(v'))+d_{Y}(f(v'),f(u))\\
		& \le\alpha\cdot d_{X}(v,v')+\alpha\cdot d_{X}(v',u)\\
		& \le2\alpha\cdot d_{X}(v,u)+3\alpha\cdot d_{X}(v,u)=5\alpha\cdot d_{X}(v,u)~.\qedhere
		\end{align*}
		
	\end{proof}

	Among other implications, \theoremref{thm:terminal to coarse-partial} implies the following:
	\begin{corollary}For every parameters $0<\epsilon,\delta<1$, every $n$-vertex weighted graph $G$ contains a spanning tree with $(1-\epsilon)$-coarse partial distortion $O(\frac{1}{\epsilon\delta})$ and $1+\delta$ lightness.
	\end{corollary}
\begin{proof}
In \cite{EFN15T} it was shown that for every weighted graph $G=(V,E,w)$ and terminal set $K\subseteq V$ of size $k$, there is a spanning tree $T$ with terminal distortion $O(k)$ and constant lightness.
Using \theoremref{thm:reduction_lightness} (proven below), we get that $G$ contains a spanning tree with terminal distortion $O(k/\delta)$ and  lightness $1+\delta$.
Now, \theoremref{thm:terminal to coarse-partial} (with $k=\frac2\epsilon$) implies the corollary.
\end{proof}

\subsubsection{Weak Terminal Distortion}\label{sec:weak}

Our definition of terminal distortion has a one-sided guarantee on {\em all pairs}, e.g., the embedding must not contract any distance.
This definition differs from the original definition of terminal distortion which appears in \cite{EFN15T}, where the non-contractive requirement was missing (formally, by \cite{EFN15T}, $f$ has terminal distortion $\alpha$ iff there is some constant $c\in \mathbb{R}$ such that $\forall(v,u)\in K\times X$, $d_X(u,v)\le c\cdot d_Y(f(u),f(v)) \le \alpha\cdot d_X(u,v)$ ). We will refer to the original definition from \cite{EFN15T} as \emph{weak terminal distortion}.

This two definitions are indeed different. For example, in \cite{EFN15T} it was shown that given $n$ points containing $k$ terminals in $\mathbb{R}^n$, they can be embedded into $\mathbb{R}^{O(\log k)}$ with weak terminal distortion $O(1)$ (under the $\ell_2$-norm). However, any non-contracting embedding with constant distortion requires $\Omega(\log n)$ dimensions, so this is impossible under our \defref{def:terminalDist}.
As a result of the difference between these definitions, there are some results in \cite{EFN15T} on which the reduction of \theoremref{thm:terminal to coarse-partial} cannot be used.	

Nevertheless, if the target space is $\ell_p$, we devise a transformation from weak terminal distortion into terminal distortion, while increasing the dimension {\em additively} by $O(\log n)$. The first step is \theoremref{thm:generalToTerminal}, in which we extend a standard embedding into a terminal one, in a different manner than \cite{EFN15T}. This theorem has other implications: in particular, we generalize and improve the dimension in a result of  \cite{ABCD05,ABN11} on embedding into $\ell_p$ with coarse partial distortion.
\begin{theorem}\label{thm:generalToTerminal}
	Let $(X,d_X)$ be metric space of size $n$, and $K\subseteq X$ be a subset of terminals. Suppose that there exists an embedding $f:K\to \ell_p^\beta$ with distortion $\alpha$, then there is an embedding $\hat{f}:X\rightarrow\ell_p^{\beta+O(\log n)}$ with terminal distortion $O(\alpha)$.
\end{theorem}

\begin{proof}
	Assume, as we may, that $f$ is non-contractive. That is, for every $v,u\in K$, $d_X(u,v)\le \Vert f(u)-f(v)\Vert_p \le \alpha\cdot d_X(u,v)$.
	
	Fix $m=O(\log n)$. Let $g:X\rightarrow\{\pm 1\}^m$ such that for every $v,u\in X$, there are at least $\frac m4$ coordinates $i$ where $g_i(v)\ne g_i(u)$ (a random $g$ will work with high probability, as can be verified by Chernoff inequality).
	For every vertex $u\in X$, let $k_u$ be the closest terminal to $u$.
	The embedding $\hat{f}$ is defined as follows. For $u\in X$,
	$$\hat{f}(u)=f(k_{u})\oplus\frac{d_X(u,k_u)}{m^{1/p}}\cdot g(u)~.$$
	
	First, we will show that $\hat{f}$ has expansion at most $O(\alpha)$ on terminal pairs. Fix some $v\in K$ and $u\in X$.
	\begin{align*}
	\Vert\hat{f}(v)-\hat{f}(u)\Vert_{p}^{p} & =\Vert f(v)-f(k_{u})\Vert_{p}^{p}+\frac{1}{m}\sum_{i=1}^{m}\left|g_{i}(u)\cdot d_{X}(u,k_{u})\right|^{p}\\
	& \le\alpha^{p}\cdot d_{X}\left(v,k_{u}\right)^{p}+d_{X}\left(u,k_{u}\right)^{p}\\
	& \le\left(\alpha^{p}+1\right)\cdot\left(d_{X}\left(v,u\right)+d_{X}\left(u,k_{u}\right)\right)^{p}\\
	& \le\left(\alpha^{p}+1\right)\cdot\left(2d_{X}\left(v,u\right)\right)^{p}~.
	\end{align*}
	Thus, $\Vert\hat{f}(v)-\hat{f}(u)\Vert_{p}\le2\left(\alpha^{p}+1\right)^{1/p}\cdot d_{X}\left(v,u\right)\le2\left(\alpha+1\right)\cdot d_{X}\left(v,u\right)$.
	
	Next, we bound the contraction for all pairs. Fix some $v,u\in X$. If $d_X(u,v)/2\ge d_X(v,k_v)+d_X(u,k_u)$, then
	\begin{eqnarray*}
		\Vert\hat{f}(v)-\hat{f}(u)\Vert_{p} & \ge & \Vert f(k_{v})-f(k_{u})\Vert_{p}\ge d_{X}(k_{v},k_{u})\\
		& \ge & d_{X}\left(v,u\right)-d_{X}\left(v,k_{v}\right)-d_{X}\left(u,k_{u}\right)\ge d_{X}\left(v,u\right)/2~.
	\end{eqnarray*}
	Otherwise,
	\begin{align*}
	\Vert\hat{f}(v)-\hat{f}(u)\Vert_{p}^{p} & \ge\frac{1}{m}\sum_{i=1}^{m}\left|g_{i}(v)\cdot d_{X}(v,k_{v})-g_{i}(u)\cdot d_{X}(u,k_{u})\right|^{p}\\
	& =\frac{1}{m}\cdot\frac{m}{4}\cdot\left|d_{X}(v,k_{v})+d_{X}(u,k_{u})\right|^{p}\ge\frac{1}{4}\cdot\left(d_{X}(v,u)/2\right)^{p}~.
	\end{align*}
	To ensure the embedding does not contract, our final embedding will be $2^{1+\frac2p}\cdot\hat{f}$.		
\end{proof}

We now show the transformation from weak terminal distortion to (non-contracting) terminal distortion.
(By \theoremref{thm:terminal to coarse-partial} this can provide embeddings with coarse partial distortion as well.)
Suppose an embedding $f:X\rightarrow Y$ has weak terminal distortion $\alpha$. In particular, its restriction to $K$ has distortion $\alpha$. Using \theoremref{thm:generalToTerminal} we conclude:
\begin{corollary}\label{cor:weakToterminal}
	Let $(X,d_X)$ be metric space of size $n$, and $K\subseteq X$ be a subset of terminals. Suppose that there exists an embedding $f:X\to \ell_p^\beta$ with weak terminal distortion $\alpha$, then there exist embedding $\hat{f}:X\rightarrow\ell_p^{\beta+O(\log n)}$ with terminal distortion $O(\alpha)$.
\end{corollary}	

Fix some $p\ge 1$. Let $\mathcal{X}$ be a subset-closed family of finite metric spaces such that for any $n\ge 1$ and any $n$-point metric space $X\in\mathcal{X}$ there exists an embedding $f_X:X\rightarrow\ell_p$ with distortion $\alpha(n)$ and dimension $\beta(n)$.
In \cite{ABCD05,ABN11} it was shown that, assuming all $f_X$ are strongly non-expansive,
\footnote{Embedding $f:X\rightarrow\ell_p$ is strongly non-expansive if $f=(\eta_1f_1,\dots,\eta_mf_m)$ where $\sum_{i=1}^{m}\eta_i=1$, and each $f_i$ is non-expansive embedding into $\mathbb{R}$.}
there is a universal constant $C$ and an embedding from $X$ into $\ell_p$ with $(1-\epsilon)$-coarse partial distortion $O(\alpha(C/\epsilon))$ and dimension $\beta(C/\epsilon)\cdot O(\log n)$.
By combining \theoremref{thm:generalToTerminal} with \theoremref{thm:terminal to coarse-partial} we  considerably improve the dimension, and remove the strongly non-expansive requirement.
\begin{corollary}\label{cor:GeneralToPartial}
	Fix some $p\ge 1$. Let $\mathcal{X}$ be a subset-closed family of finite metric spaces such that for any $n\ge 1$ and any $n$-point metric space $X\in\mathcal{X}$ there exists an embedding $f_X:X\rightarrow\ell_p$ with distortion $\alpha(n)$ and dimension $\beta(n)$.
	Then there is an embedding from $X$ into $\ell_p$ with $(1-\epsilon)$-coarse partial distortion $O(\alpha(2/\epsilon))$ and dimension $\beta(2/\epsilon)+O(\log n)$.
\end{corollary}
	
	\section{Light Spanner with Prioritized Distortion}
	
	In this section we prove that every graph admits a light spanner with bounded prioritized distortion.
	\begin{theorem}[Prioritized Spanner]\label{thm:Prioritized-Spanner}
		Given a graph $G=\left(V,E\right)$, a parameter $0<\rho<1$ and any priority ranking $v_{1},v_{2},\dots,v_{n}$ of $V$, there exists a spanner $H$
		with lightness $1+\rho$ and prioritized distortion $\tilde{O}\left(\log j\right)/\rho$.
	\end{theorem}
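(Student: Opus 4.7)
I would follow the high-level approach sketched in the outline. First, partition the priority ranking into nested groups $K_{1}\subseteq K_{2}\subseteq\cdots\subseteq K_{L}$, where $L=O(\log\log n)$ and $K_{i}$ consists of the top $2^{2^{i}}$ vertices, so that $\log|K_{i}|=2^{i}$. The final spanner will have the form $H=T\cup\bigcup_{i=1}^{L} E_{i}$, where $T$ is the MST of $G$ (which is unique by the weight assumption) and $E_{i}$ is a set of ``excess'' edges built from level $i$.

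For each level $i$, I would build a subgraph $H_{i}\subseteq G$ containing $T$ as follows. Apply a known near-optimal Steiner/terminal light spanner construction (in the spirit of \cite{CW16}) to obtain a spanner of $G$ preserving all $K_{i}$-to-$K_{i}$ distances with worst-case distortion $O(\log|K_{i}|)=O(2^{i})$ and constant lightness. Then augment this with a Shallow Light Tree rooted at (the contraction of) $K_{i}$, with root-stretch $1+\beta_{i}$ and lightness $O(1/\beta_{i})$; this guarantees that every vertex $u\notin K_{i}$ has a short path in $H_{i}$ to its nearest $K_{i}$-vertex. For a pair $(v_{j},v_{k})$ with $v_{j}\in K_{i}\setminus K_{i-1}$, letting $r\in K_{i}$ be closest to $v_{k}$, the triangle inequality yields $d_{G}(r,v_{j})\le 2\,d_{G}(v_{k},v_{j})$, so $d_{H_{i}}(v_{k},v_{j})\le(1+\beta_{i})d_{G}(v_{k},v_{j})+O(2^{i})\cdot 2 d_{G}(v_{k},v_{j})=O(2^{i})\cdot d_{G}(v_{k},v_{j})$. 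Since $v_{j}\notin K_{i-1}$ implies $j>2^{2^{i-1}}$, i.e.\ $2^{i}=O(\log j)$, the resulting distortion is $O(\log j)$.

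To control lightness, I would apply the black-box reduction advertised in the introduction (distortion $t$, lightness $\ell$ becomes distortion $t/\delta$, lightness $1+\delta\ell$) separately at each level with a parameter $\delta_{i}$, arranging that every $H_{i}$ equals $T$ together with an edge set $E_{i}$ of weight at most $\delta_{i}\ell\cdot w(T)$. Because all $H_{i}$ share the same backbone $T$, the union $H=T\cup\bigcup_{i} E_{i}$ has weight at most $\bigl(1+\sum_{i}\delta_{i}\ell\bigr)w(T)$. Choosing $\delta_{i}=\Theta(\rho/i^{2})$ (and $\beta_{i}$ comparably, so that the SLT contributions are absorbed into the same convergent series) makes $\sum_{i}\delta_{i}\ell=O(\rho)$, giving lightness $1+\rho$ after absorbing constants into $\rho$. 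The cost at the distortion side is a factor $1/\delta_{i}=O(i^{2}/\rho)=O(\log^{2}\log j/\rho)$, so the prioritized distortion becomes $O(\log j)\cdot O(\log^{2}\log j)/\rho=\tilde O(\log j)/\rho$, as claimed.

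The main obstacle is the construction step behind the lightness argument: I must ensure that all $L$ level-spanners can be built to contain \emph{precisely the same} MST $T$ of $G$, rather than each having its own MST proportional to $w(T)$. Otherwise one would pay $w(T)$ a factor of $L=O(\log\log n)$ times and lose the $1+\rho$ bound entirely. This requires opening up the terminal light-spanner construction (and the black-box lightness reduction) to always use $T$ as the backbone, treating the extra edges at each level as a small additive set whose weights then form the convergent series $\sum_{i}\delta_{i}\ell\le\rho$. Verifying this, and simultaneously calibrating $\delta_{i}$ and $\beta_{i}$ so the distortion absorbs into $\tilde O(\log j)/\rho$, is the technical heart of the proof.
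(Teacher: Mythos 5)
Your proposal is correct and follows essentially the same route as the paper: the doubly-exponential priority groups $K_i$ with $|K_i|=2^{2^i}$, a terminal spanner for $K_i\times V$ built from a constant-lightness $O(\log|K_i|)$-distortion spanner on $K_i$ plus an SLT rooted at $K_i$, the black-box lightness reduction applied per level with $\delta_i=\Theta(\rho/i^2)$, and a union argument relying on all levels sharing the MST. The ``obstacle'' you flag is handled exactly as you anticipate: the reduction (\theoremref{thm:reduction_lightness}) explicitly adds the (unique) MST of $G$ under the original weights to each level-spanner, so the common backbone comes for free and only the excess weights $\delta_i\ell\cdot w(T)$ accumulate into the convergent series.
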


	Combining \theoremref{thm:Prioritized-Spanner} and \theoremref{thm:priority to scaling} we obtain the following.
	\begin{theorem}
		\label{cor:light spanner scaling distortion}
		For any parameter $0<\rho<1$, any graph contains a spanner with coarse scaling distortion $\tilde{O}\left(\log\left(1/\epsilon\right)\right)/\rho$ and lightness $1+\rho$.\end{theorem}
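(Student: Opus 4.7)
The plan is a black-box composition of the paper's two main ingredients, in a specific order. First, I would apply \theoremref{thm:priority to scaling} to the metric space $(V,d_G)$, but using only the \emph{first half} of its conclusion: the theorem constructs a particular priority ranking $v_1,\dots,v_n$ of $V$ (obtained by concatenating the density nets $N_1,N_2,\dots$ of geometrically decreasing density parameters $\epsilon_i=2^{-i}$) with the property that \emph{every} non-contractive embedding whose prioritized distortion with respect to this ranking is $\alpha$ automatically has coarse scaling distortion $O(\alpha(8/\epsilon))$.

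Next, I would feed precisely this ranking $v_1,\dots,v_n$, together with the given $\rho$, into \theoremref{thm:Prioritized-Spanner}. Its statement allows an arbitrary predetermined ranking, so this is legitimate. The output is a spanner $H$ of $G$ with lightness $1+\rho$ and prioritized distortion $\alpha(j)=\tilde O(\log j)/\rho$ with respect to the chosen ranking.

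Finally, I would consider the identity map from $(V,d_G)$ to $(V,d_H)$. Because $H$ is a subgraph of $G$, we have $d_H(u,v)\ge d_G(u,v)$ for every pair $u,v$, so the identity is non-contractive; and its prioritized distortion is exactly $\alpha$. Invoking \theoremref{thm:priority to scaling} on this embedding then yields coarse scaling distortion $O(\alpha(8/\epsilon))=\tilde O(\log(8/\epsilon))/\rho=\tilde O(\log(1/\epsilon))/\rho$, while the lightness of $H$ remains $1+\rho$, which is the claim.

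There is essentially no technical obstacle: the proof is a one-line corollary of the two preceding results, so the only thing to check is the compatibility of interfaces, namely that \theoremref{thm:Prioritized-Spanner} indeed accepts the specific ranking dictated by \theoremref{thm:priority to scaling} (it does, since the former's quantification is over \emph{any} ranking). The mild point worth emphasizing in the write-up is the direction of composition: one must first let \theoremref{thm:priority to scaling} \emph{fix} the ranking from the metric $(V,d_G)$, and only then apply the prioritized spanner construction, rather than the other way around.
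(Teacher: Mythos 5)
Your proposal is correct and is exactly the paper's intended argument: the paper proves this theorem simply by combining \theoremref{thm:Prioritized-Spanner} with \theoremref{thm:priority to scaling}, i.e., first fixing the density-net-based ranking determined by the metric $(V,d_G)$ and then building the prioritized spanner for that ranking. Your remark about the order of composition (ranking first, spanner second) is the right point to make explicit.
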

	
		 By \lemmaref{lemma:scaling-q-norm} it follows that this spanner has $\ell_q$-distortion $\tilde{O}(q)/\rho$ for any $1\le q<\infty$.
		
		 We can also obtain a spanner with both scaling distortion and prioritized distortion simultaneously, where the priority is with respect to an arbitrary ranking $\pi=v_1,\dots,v_n$. To achieve this, one may define a ranking which interleaves $\pi$ with the ranking generated in the proof of \theoremref{thm:priority to scaling}.  \\

	We now turn to proving \theoremref{thm:Prioritized-Spanner}. The proof is based on the following main technical lemma:
	\begin{lemma}\label{lem:terminal spanner}
		Given a graph $G=\left(V,E\right)$, a subset $K\subseteq V$ of size $k$, and a parameter $0<\delta<1$, there exists a spanner $H$ that \textbf{1)} contains the MST of $G$, \textbf{2)} has lightness $1+\delta$, and \textbf{3)} every pair in $K\times V$ has distortion $O((\log k)/\delta)$.
	\end{lemma}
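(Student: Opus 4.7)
The plan is to prove the lemma in two stages. First, I would build a ``base'' spanner with distortion $O(\log k)$ on pairs in $K \times V$ and lightness $O(1)$. Second, I would invoke the paper's general black-box reduction, which converts a spanner with distortion $t$ and lightness $\ell$ into one with distortion $t/\delta$ and lightness $1 + \delta\ell$, in order to obtain the claimed $(O(\log k / \delta), 1+\delta)$ tradeoff. The MST $T$ of $G$ is explicitly included throughout, which gives condition (1) of the lemma for free.

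The base spanner is the union of three pieces: (i) the MST $T$; (ii) a Shallow Light Tree $F$ rooted at $K$, obtained by contracting $K$ into a single auxiliary vertex and applying the KRY93 construction with a constant lightness parameter, so that $w(F)=O(w(T))$ and $d_F(v,K)\leq O(1)\cdot d_G(v,K)$ for every $v\in V$; and (iii) a CW16-style light spanner $S$ on the weighted complete graph over $K$ with edge-weights $d_G(u,v)$, which provides distortion $O(\log k)$ among pairs in $K\times K$ and total weight $O\bigl(w(MST_K^{\mathrm{metric}})\bigr)$, with its edges realized in $G$ via shortest paths. Since $w(MST_K^{\mathrm{metric}})\leq 2\,w(T)$ by the standard Steiner-tree doubling bound, the base spanner has lightness $O(1)$. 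For a pair $(u,v)\in K\times V$ with $v\notin K$, the distortion follows by triangle inequality: letting $k^*\in K$ be nearest to $v$,
\[
d_H(u,v) \;\leq\; d_F(v,k^*) + d_S(k^*,u) \;\leq\; O(1)\cdot d_G(v,K) + O(\log k)\cdot d_G(k^*,u) \;\leq\; O(\log k)\cdot d_G(u,v),
\]
using $d_G(v,K)\leq d_G(v,u)$ and $d_G(k^*,u)\leq 2\,d_G(v,u)$; pairs inside $K\times K$ are handled directly by $S$.

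The main obstacle will be cleanly interfacing the base construction with the general reduction. The reduction is stated for full-graph spanners, so I must verify that applying it as a black box preserves the \emph{terminal}-distortion guarantee (distortion only for pairs in $K\times V$) and the MST-containment property, rather than only the worst-case version; both should follow from the opaque nature of the reduction but need to be checked against its construction. A secondary, more routine issue is tuning the constants hidden in the base lightness (coming from the SLT parameter and from the CW16 spanner on $K$) so that the reduction parameter, when chosen proportional to $\delta$, yields final lightness exactly $1+\delta$ and distortion exactly $O((\log k)/\delta)$ with the constants absorbed.
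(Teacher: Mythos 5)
Your proposal matches the paper's proof essentially step for step: a base spanner formed from the CW16 spanner on the terminal metric (realized by shortest paths, with the Steiner-doubling bound for lightness), an SLT rooted at $K$, and the triangle-inequality argument for $K\times V$ pairs, followed by the black-box lightness reduction. The interfacing concerns you flag are already resolved by the paper's formulation of the reduction, which allows a pair-dependent distortion function $t(u,v)$ (hence preserves the terminal-only guarantee) and explicitly adds the MST to the output spanner.
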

	
	Before proving this lemma, let us first apply it to prove \theoremref{thm:Prioritized-Spanner}.
	
	\begin{proof}[Proof of \theoremref{thm:Prioritized-Spanner}]
		For every $1\le i\le\lceil\log\log n\rceil$ let $K_{i}=\left\{ v_{j}\ :\ j\le2^{2^{i}}\right\} $. Let $H_{i}$ be the spanner given by \lemmaref{lem:terminal spanner} with respect to the	set $K_{i}$ and the parameter $\delta_i=\rho/i^2$.
		Hence $H_{i}$ has $1+\rho/i^2$ lightness and $O\left(\frac{\log|K_{i}|}{\delta_{i}}\right)=O(2^{i}\cdot i/\rho)$ distortion for pairs in $K_i\times V$. Let $H=\bigcup_{i}H_{i}$ be the union of all these spanners (that is, the graph containing every edge of every one of these spanners). As each $H_{i}$ contains the unique MST of $G$, it holds that
		\begin{eqnarray*}
			\Psi(H) & \le & 1+\sum_{i\ge1}\rho/i^2=1+O\left(\rho\right)\ .
		\end{eqnarray*}

		To see the prioritized distortion, let $v_{j},v_{r}\in V$ be such that $j<r$, and
		let $1\le i\le \lceil\log\log n\rceil$ be the minimal index such that $v_{j}\in K_{i}$. Note that
		$2^{2^{i-1}}\le j$, and in particular $2^{i-1}\le\log j$ (with the exception of $j=1$, but that case holds by the virtue of $j=2$, say). This implies that
		\begin{eqnarray*}
			d_{H}(v_{j},v_{r})&\le& d_{H_{i}}(v_{j},v_{r}) ~\le~ O(2^{i}\cdot i^2/\rho)\cdot d_{G}(v_{j},v_{r})\\
			&\le&\tilde{O}\left(\log j\right)/\rho \cdot d_{G}(v_{j},v_{r})\ .
		\end{eqnarray*}
		as required.
	\end{proof}

	\subsection{Proof of \lemmaref{lem:terminal spanner}}
	The construction of the spanner that fulfills the properties promised in \lemmaref{lem:terminal spanner} is as follows. First, we use the spanner of \cite{CW16} to get a spanner with lightness $O(1)$ and distortion $O(\log k)$ over pairs in $K\times K$.
	Then, by combining this spanner with the SLT by \cite{KRY93}, we expand the $O(\log k)$ distortion guarantee to all pairs in $K\times V$, while the lightness is still $O(1)$. Finally, we use a general reduction (\theoremref{thm:reduction_lightness}), that reduces the weight of a spanner while increasing its distortion. By applying the reduction, we get a spanner with $1+\rho$ lightness while paying additional factor of $1/\rho$ in the distortion.

	We begin by describing the general reduction.
	\begin{theorem}\label{thm:reduction_lightness}
		Let $G=(V,E)$ be a graph, $0<\delta<1$ a parameter and $t:{V\choose 2}\rightarrow\mathbb{R}_{+}$ some function. Suppose that for every weight function $w:E\rightarrow\mathbb{R}_{+}$ there exists a spanner $H$ with lightness $\ell$ such that every pair $u,v\in V$ suffers distortion at most $t(u,v)$.
		Then for every weight function $w$ there exists a spanner $H$ with lightness $1+\delta\ell$  and such that every pair $u,v$ suffers distortion at most $t(u,v)/\delta$.
		Moreover, $H$ contains the MST of $G$ with respect to $w$.

	\end{theorem}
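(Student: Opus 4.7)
The plan is to apply the hypothesis to a cleverly reweighted copy of $G$. Let $T$ denote the MST of $G$ under $w$, and define a new weight function $w':E\to\mathbb{R}_+$ by $w'(e)=w(e)$ for $e\in T$ and $w'(e)=w(e)/\delta$ for $e\notin T$. Because $\delta<1$, every non-tree edge becomes heavier while tree edges are unchanged, so the cycle property guarantees that $T$ is still an MST of $(G,w')$ (any ties introduced by the rescaling are broken consistently in favor of $T$, as allowed by the paper's standing convention). In particular $w'(T)=w(T)$.

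Next, I invoke the hypothesis on $(G,w')$ to obtain a spanner $H'$ whose lightness with respect to $w'$ is at most $\ell$ and such that every pair $u,v$ has $w'$-distortion at most $t(u,v)$. Set $H:=H'\cup T$; clearly $H$ contains the MST of $G$. Since $w(e)\le w'(e)\le w(e)/\delta$ for every edge $e$, every path $P$ satisfies $w(P)\le w'(P)\le w(P)/\delta$, and hence
\[
d_G^{w}(u,v)\le d_G^{w'}(u,v)\le d_G^{w}(u,v)/\delta
\]
for every pair $u,v$.

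The distortion of $H$ in $(G,w)$ now follows from the chain
\[
d_H^{w}(u,v)\le d_{H'}^{w'}(u,v)\le t(u,v)\cdot d_G^{w'}(u,v)\le \frac{t(u,v)}{\delta}\cdot d_G^{w}(u,v),
\]
where the first inequality uses $H'\subseteq H$ together with $w\le w'$. For the lightness, I split $w(H)$ over $T$ and $H'\setminus T$:
\[
w(H)\le w(T)+\sum_{e\in H'\setminus T}w(e)=w(T)+\delta\sum_{e\in H'\setminus T}w'(e)\le w(T)+\delta\cdot w'(H')\le (1+\delta\ell)\,w(T),
\]
using $w'(T)=w(T)$ in the final step.

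The only mildly delicate point is verifying that $T$ remains an MST after the reweighting, which rests on the cycle property and a consistent tie-breaking convention; the rest of the argument is bookkeeping that carefully tracks whether distances are measured with respect to $w$ or to $w'$. The elegance of the reduction is that rescaling only the non-tree edges kills two birds with one stone: it forces the hypothetical spanner to spend its non-MST budget $\delta$-efficiently (yielding additive lightness $\delta\ell$), while simultaneously contracting distances in $(G,w')$ by at most a factor $1/\delta$ relative to $(G,w)$, which absorbs exactly into the distortion guarantee.
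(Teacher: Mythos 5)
Your proposal is correct and follows essentially the same route as the paper: reweight non-MST edges by $1/\delta$, observe $T$ stays the MST, apply the hypothesis to the reweighted graph, take the union with $T$, and do the same lightness and distortion accounting. The only cosmetic difference is that you justify MST-preservation via the cycle property, while the paper notes directly that every spanning tree other than $T$ only gets heavier.
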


	\begin{proof}
Fix some weight function $w$ and let $G=(V,E,w)$ be the graph associated with this weight function, and
		let $T$ be the MST of $G$. Set $w':E\rightarrow\mathbb{R}_{+}$ to be a new weight function
		$$w'(e)=\begin{cases}
		w(e) & e\in T\\
		w(e)/\delta & e\notin T
		\end{cases}~,$$
		that is, we multiply the weight of all non-MST edges by $1/\delta$. Let
		$G'=(V,E,w')$ be the graph $G$ associated with the new weight function $w'$.
		Note that $T$ is also the MST of $G'$ (since the weight of any spanning tree is higher in $G'$ than in $G$ except for $T$ itself).
		By our assumption there exists a spanner   $H'=\left(V,E_{H'},w'\right)$
		of $G'$ with distortion bounded by $t$ and lightness $\ell$.
		Set $H=\left(V,E_{H'}\cup T,w\right)$ as a spanner
		of $G$. The edge set of $H$ consists of the edges of $H'$ together with the MST edges, all with the original weight function $w$.
		
		As the weight of the non-MST edges are larger in $G'$ by $1/\delta$ factor compared to their weight in $G$, we have
		\begin{eqnarray*}
			w(E_{H}) & = & w(T)+w\left(E_{H'}\setminus T\right) = w(T)+\delta \cdot w'\left(E_{H'}\setminus T\right) \le w(T)+\delta \cdot w'\left(E_{H'}\right)\\
			& \le & w(T)+\delta\ell\cdot w'\left(T\right)=\left(1+\delta\ell\right)\cdot w(T)\ ,
		\end{eqnarray*}
		concluding that the lightness of $H$ is at most $1+\delta\ell$.
		
		To bound the distortion, consider an arbitrary pair of vertices $u,v\in V$.
		Let $P_{u,v}$ be the shortest path from $u$ to $v$ in $G$. As
		for each edge $e\in P_{u,v}$, $w'(e)\le w(e)/\delta$ we have that
        \[
			d_{G'}\left(u,v\right)\le \sum_{e\in P_{u,v}}w'\left(e\right)\le \sum_{e\in P_{u,v}}\frac{1}{\delta}\cdot w\left(e\right) = \frac{1}{\delta}\cdot d_{G}\left(u,v\right)~,
		\]
        Therefore:
		\[
			d_{H}\left(u,v\right)\le d_{H'}\left(u,v\right)\le t(u,v) \cdot d_{G'}\left(u,v\right)\le\frac{t(u,v)}{\delta}d_{G}\left(u,v\right)~,
		\]
		as required.
		\end{proof}

In a recent work,  Chechik and Wulff{-}Nilsen achieved the following result:
	\begin{theorem}[\cite{CW16}]\label{thm:CW16}
		For every weighted graph $G=(V,E,w)$ and parameters $k\ge 1$ and $0<\epsilon\le 1$, there exist a polynomial time algorithm that constructs a spanner with distortion $(2t-1)(1+\epsilon)$ and lightness $n^{1/t}\cdot \poly(\frac{1}{\epsilon})$.
	\end{theorem}
	Note that for an $n$-vertex graph with parameters $t=\log n, \epsilon=1$, they get a spanner with distortion $O(\log n)$ and constant lightness. However, their construction does not seem to provide lightness arbitrarily close to $1$.

	A tree $\mathcal{T}=\left(V',E',w'\right)$ is called a $Steiner\mbox{ }tree$
	for a graph $G=\left(V,E,w\right)$ if $\left(1\right)$ $V\subseteq V'$,
	and $\left(2\right)$ for any pair of vertices $u,v\in V$ it holds
	that $d_{\mathcal{T}}\left(u,v\right)\ge d_{G}\left(u,v\right)$. The $minimum\ Steiner\ tree$
	$T$ of $G$, denoted $SMT\left(G\right)$, is a Steiner tree of $G$
	with minimum weight. It is well-known that for any graph $G$, $w\left(SMT\left(G\right)\right)\ge\frac{1}{2}MST\left(G\right)$.
	(See, e.g., \cite{GilbertP68}, Section 10.)
	
	We will use \cite{CW16} spanner to construct a spanner with $O(1)$ lightness and distortion $O(\log k)$ over pairs in $K\times K$.
	Let $G_k=(K,{K\choose 2},w_k)$ be the complete graph over the terminal set $K$ with weights $w_k(u,v)=d_G(u,v)$ (for $u,v\in K$) that are given by the shortest path metric in $G$.
	Let $T_k$ be the MST of $G_k$. Note that the MST $T$ of $G$ is a Steiner tree of $G_k$, hence $w_k(T_k)\le  2\cdot w_k(SMT(G_k)) \le  2\cdot w(T)$ .
	
	Using \theoremref{thm:CW16}, let $H_k=(K,E_k,w_k)$ be a spanner of $G_k$ with weight $O(w_k(T_k))=O(w(T))$ (constant lightness) and distortion $O(\log k)$. For a pair of vertices $u,v\in K$, let $P_{uv}$ denote the shortest path between $u$ and $v$ in $G$. Let $H'=(V,E',w)$ be a subgraph of $G$ with the set of edges $E'=\cup_{\{u,v\}\in E_k}P_{uv}$ (i.e. for every edge $\{u,v\}$ in $H_k$, we take the shortest path from $u$ to $v$ in $G$).
	It holds that,
	\[
	w(H')\le\sum_{\left\{ u,v\right\} \in E_{k}}w(P_{uv})=\sum_{e\in E_{k}}w_{k}(e)=O(w(T))~.
	\]
	Moreover, for every pair $u,v\in K$,
	\begin{equation}\label{eq:4rr}
			d_{H'}(u,v)\le d_{H_k}(u,v)\le O(\log k)\cdot d_{G_k}(u,v)=O(\log k)\cdot d_G(u,v)~.
	\end{equation}

	Now we extend $H'$ so that every pair in $K\times V$ will suffer distortion at most $O(\log k)$. To this end, we use the following lemma regarding shallow light trees (SLT), which is implicitly proved in \cite{KRY93,ABP92}.
	
	\begin{lemma}\label{lem:slt_term}
		\label{lem: slt multiple vertices }Given a graph $G=\left(V,E\right)$, a parameter $\alpha>1$, and a subset $K\subseteq V$, there exists a spanner $S$ \footnote{In fact, $S$ is a spanning forest of $G$.} of $G$ with lightness $1+\frac{2}{\alpha-1}$, and  for any vertex $u\in V$, $d_{S}(u,K)\le\alpha\cdot d_{G}(u,K)$. 
	\end{lemma}

	Let $S$ be the spanner of \lemmaref{lem: slt multiple vertices } with respect to the set $K$ and parameter $\alpha=2$. Define $H"$ as the union of $H'$ and $S$. As both $H'$ and $S$ have constant lightness, so does $H"$. It remains to bound the distortion of an arbitrary pair $v\in K$ and $u\in V$. Let $k_{u}\in K$
	be the closest vertex to $u$ among the vertices in $K$ with respect
	to the distances in the spanner $S$. By the assertion of \lemmaref{lem: slt multiple vertices },
	\begin{equation}\label{eq:re}
	d_{S}(u,k_{u})=d_{S}(u,K)\le2\cdot d_{G}(u,K)\le 2\cdot d_{G}(u,v)~.
	\end{equation}

	Using the triangle inequality,

	\begin{equation}\label{eq:rr}
d_{G}(v,k_{u})\le d_{G}(v,u)+d_{G}(u,k_{u})\le d_{G}(v,u)+d_{S}(u,k_{u})\stackrel{\eqref{eq:re}}{\le}3\cdot d_{G}(v,u)~.
	\end{equation}
	Since both $v,k_u\in K$ it follows that
	\begin{equation}\label{eq:ee}
d_{H'}(v,k_{u})\stackrel{\eqref{eq:4rr}}{\le}O(\log k)\cdot d_{G}(v,k_{u})\stackrel{\eqref{eq:rr}}{\le}O(\log k)\cdot d_{G}(v,u)~.
	\end{equation}	
	We conclude that
	\begin{equation*}
d_{H"}\left(v,u\right)\le d_{H'}\left(v,k_{u}\right)+d_{S}\left(k_{u},u\right)\stackrel{\eqref{eq:re}\wedge\eqref{eq:ee}}{\le}O(\log k)\cdot d_{G}(v,u)~.
	\end{equation*}
	We showed a polynomial time algorithm, that given a weighted graph $G=\left(V,E,w\right)$ and a subset $K\subseteq V$ of size $k$, constructs a spanner $H$ with lightness $O(1)$, and such that every pair in $K\times V$ has distortion at most $O(\log k)$.
	Now  \theoremref{thm:reduction_lightness} implies \lemmaref{lem:terminal spanner}.

	\section{A Light Tree with Constant Average Distortion}
	
	Here we prove our main theorem on finding a light spanning tree with constant average distortion. Later on we show a probabilistic embedding into a distribution of light spanning trees with improved bound on higher moments of the distortion.
	\begin{theorem}\label{thm:light scaling tree}
		For any parameter $0<\rho<1$, any graph contains a spanning tree with scaling distortion $\tilde{O}(\sqrt{1/\epsilon})/\rho$ and lightness $1+\rho$.
		\end{theorem}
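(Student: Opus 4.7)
The plan is to follow the outline sketched in the introduction: produce a light spanner with polylogarithmic scaling distortion, then apply an existing spanning-tree theorem of \cite{ABN15} \emph{inside} that spanner, and finally compose the two scaling bounds. This pipeline lets the lightness be inherited from the spanner and the tree-distortion guarantee from the intermediate embedding into the spanner.

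Concretely, I would proceed in three steps. First, invoke \theoremref{cor:light spanner scaling distortion} on the input graph $G$ with parameter $\rho$ to obtain a spanner $H$ of $G$ with lightness $\Psi(H)\le 1+\rho$ and (coarse) scaling distortion $\gamma_1(\epsilon)=\tilde{O}(\log(1/\epsilon))/\rho$ with respect to $d_G$. Second, regard $H$ as a weighted graph on $V$ and apply the spanning-tree embedding of \cite{ABN15} to $H$; this yields a spanning tree $T\subseteq H$ whose distortion as an embedding into $d_H$ satisfies the scaling bound $\gamma_2(\epsilon)=O(1/\sqrt{\epsilon})$. Since $T$ is a spanning tree of $H$ and $V(H)=V(G)$, $T$ is also a spanning tree of $G$. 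Third, bound the lightness of $T$: because $T\subseteq H$ (as an edge set), $w(T)\le w(H)\le(1+\rho)\cdot w(\mathrm{MST}(G))$, so $\Psi(T)\le 1+\rho$ as required.

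It remains to compose the two scaling bounds to bound the distortion of $T$ as an embedding of $G$. Given any $\epsilon\in(0,1)$, apply each of the two scaling guarantees at parameter $\epsilon/2$: at most $(\epsilon/2)\binom{n}{2}$ pairs are ``bad'' for $H$ with respect to $G$, and at most $(\epsilon/2)\binom{n}{2}$ pairs are ``bad'' for $T$ with respect to $H$. By a union bound, at least $(1-\epsilon)\binom{n}{2}$ pairs $\{u,v\}$ satisfy both $d_H(u,v)\le\gamma_1(\epsilon/2)\cdot d_G(u,v)$ and $d_T(u,v)\le\gamma_2(\epsilon/2)\cdot d_H(u,v)$; multiplying gives
\[
\frac{d_T(u,v)}{d_G(u,v)}\;\le\;\gamma_1(\epsilon/2)\cdot\gamma_2(\epsilon/2)
\;=\;\tilde{O}(\log(2/\epsilon))/\rho\,\cdot\,O(\sqrt{2/\epsilon})
\;=\;\tilde{O}(1/\sqrt{\epsilon})/\rho,
\]
which is exactly the claimed scaling distortion.

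The main conceptual obstacle is not any single step but rather the justification that the composition is legitimate: one must be careful to apply each scaling bound at $\epsilon/2$ (rather than $\epsilon$) before taking the intersection of good pairs, and to verify that the $\tilde{O}$ factors (polylogarithmic in $1/\epsilon$) are swallowed by the $1/\sqrt{\epsilon}$ term so the final bound remains $\tilde{O}(1/\sqrt{\epsilon})/\rho$. A minor point that also needs noting is that the lightness guarantee survives the second step only because we build the tree as a subgraph of $H$ and use $T\subseteq H$ to bound $w(T)\le w(H)$; this relies crucially on the ABN15 result producing an actual spanning tree of the input graph rather than a Steiner or auxiliary tree.
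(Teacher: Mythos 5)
Your proposal is correct and follows essentially the same route as the paper: it takes the spanner of \theoremref{cor:light spanner scaling distortion}, applies the \cite{ABN15} spanning-tree result (\theoremref{thm:scaling-spanning-trees}) inside it, inherits lightness via $T\subseteq H$, and composes the two scaling bounds at parameter $\epsilon/2$ with a union bound --- which is exactly the content of \lemmaref{lem:Spanners-Composition-Lemma} that the paper invokes, here simply re-proved inline.
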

	
	It follows from \lemmaref{lemma:scaling-q-norm} that the average distortion of the spanning tree obtained is $O(1/\rho)$. Moreover, the
	$\ell_q$-distortion is $O(1/\rho)$ for any fixed $1 \leq q <2$, $\tilde{O}\left(\log^{1.5}n\right)/\rho$ for $q=2$, and $\tilde{O}(n^{1-2/q})/\rho$ for any fixed $2< q < \infty$.

	We will need the following simple lemma, that asserts the scaling distortion of a composition of two maps is essentially the product of the scaling distortions of these maps.\footnote{Note that this is not true for the average distortion -- one may compose two maps with constant average distortion and obtain a map with $\Omega(n)$ average distortion.}

	\begin{lemma}
		\label{lem:Spanners-Composition-Lemma}Let
		$\left(X,d_{X}\right)$, $\left(Y,d_{Y}\right)$ and $\left(Z,d_{Z}\right)$
		be metric spaces. Let $f:X\rightarrow Y$ (respectively, $g:Y\rightarrow Z$) be a non-contractive onto
		embedding with scaling distortion $\alpha$ (resp., $\beta$). Then $g\circ f$
		has scaling
		distortion $\alpha(\epsilon/2)\cdot\beta(\epsilon/2)$.
	\end{lemma}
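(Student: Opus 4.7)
The plan is a straightforward union-bound argument on the set of ``good'' pairs at each of the two scales. Fix $\epsilon\in(0,1)$. I will first observe that because $f$ is non-contractive, it is injective (distinct points have positive distance, hence distinct images), and since $f$ is onto, it induces a bijection between $\binom{X}{2}$ and $\binom{Y}{2}$. The same holds for $g$ between $Y$ and $Z$. This lets us transfer the pair-counts at both levels back to $\binom{X}{2}$ without any loss.

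Next I would define the two natural sets of good pairs. Let $P_1\subseteq\binom{X}{2}$ be the pairs $\{u,v\}$ with $d_Y(f(u),f(v))\le\alpha(\epsilon/2)\,d_X(u,v)$; by the scaling distortion of $f$, $|P_1|\ge(1-\epsilon/2)\binom{|X|}{2}$. Let $Q\subseteq\binom{Y}{2}$ be the pairs $\{y,y'\}$ with $d_Z(g(y),g(y'))\le\beta(\epsilon/2)\,d_Y(y,y')$; by the scaling distortion of $g$, $|Q|\ge(1-\epsilon/2)\binom{|Y|}{2}$. Pulling $Q$ back through the bijection induced by $f$ yields a set $P_2\subseteq\binom{X}{2}$ of the same size, so $|P_2|\ge(1-\epsilon/2)\binom{|X|}{2}$.

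A simple inclusion-exclusion then gives $|P_1\cap P_2|\ge(1-\epsilon)\binom{|X|}{2}$, and for any pair $\{u,v\}\in P_1\cap P_2$ we can chain the two estimates:
\[
d_Z(g(f(u)),g(f(v)))\le \beta(\epsilon/2)\,d_Y(f(u),f(v))\le \alpha(\epsilon/2)\,\beta(\epsilon/2)\,d_X(u,v),
\]
which is exactly the scaling distortion bound claimed for $g\circ f$. Non-contractiveness of $g\circ f$ follows immediately from non-contractiveness of each factor.

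There is no real obstacle here; the only subtle point is making sure that the two ``good'' events, which a priori live on different pair sets ($\binom{X}{2}$ and $\binom{Y}{2}$), can be intersected. That is exactly what the bijection between pairs induced by the onto, non-contractive map $f$ is used for, and this is the one place where the ``onto'' hypothesis is essential. Everything else is the union bound $\epsilon/2+\epsilon/2=\epsilon$, which is precisely why the statement loses a factor of $2$ in the argument of each scaling distortion function.
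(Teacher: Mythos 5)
Your proposal is correct and follows essentially the same argument as the paper: define the good (or bad) pairs for $f$ and for $g$ at scale $\epsilon/2$, use a union bound to get at least $(1-\epsilon)\binom{|X|}{2}$ pairs good for both, and multiply the two distortion bounds on those pairs. The only difference is that you make explicit the bijection on pairs induced by the non-contractive onto map $f$ (which the paper uses implicitly when it measures the bad pairs of $g$ directly as pairs of $X$), and that is a fine clarification rather than a new route.
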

	\begin{proof}
		Let $n=|X|$. Let $\dist_f(v,u)=\frac{d_{Y}\left(f(v),f(u)\right)}{d_{X}\left(v,u\right)}$ be the distortion of the pair $u,v\in X$ under $f$, and similarly let $\dist_g(v,u)=\frac{d_{Z}\left(g(f(v)),g(f(u))\right)}{d_{Y}\left(f(v),f(u)\right)}$. Fix some $\epsilon\in(0,1)$. We would like to show that at most $\epsilon\cdot{n \choose 2}$ pairs suffer distortion
		greater than $\alpha(\epsilon/2)\cdot\beta(\epsilon/2)$ by $g\circ f$.
		Let $A=\left\{ \left\{ v,u\right\} \in{X \choose 2}:\dist_f(v,u)>\alpha(\epsilon/2)\right\}$
		and
		$B=\left\{ \left\{ v,u\right\} \in{X \choose 2}:\dist_g(v,u)>\beta(\epsilon/2)\right\} $.
		By the bound on the scaling distortions of $f$ and $g$, it holds that $|A\cup B|\le |A|+|B|\le\epsilon\cdot{n \choose 2}$.
		Note that if $\left\{ v,u\right\} \notin A\cup B$ then
		\begin{eqnarray*}
			\frac{d_{Z}\left(g(f(v)),g(f(u))\right)}{d_{X}\left(v,u\right)}&=&\dist_f(v,u)\cdot \dist_g(v,u)\\
			&\le&\alpha(\epsilon/2)\cdot\beta(\epsilon/2)\ ,
		\end{eqnarray*}
		which concludes the proof.
	\end{proof}
	
	We will also need the following result, that was proved in \cite{ABN15}.
	\begin{theorem}[\cite{ABN15}]\label{thm:scaling-spanning-trees}
		Any graph contains a spanning tree with
		scaling distortion $O(\sqrt{1/\epsilon})$.
	\end{theorem}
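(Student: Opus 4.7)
The plan is to build the spanning tree from a hierarchical \emph{petal} (or star) decomposition of $G$ in the spirit of Elkin--Emek--Spielman--Teng, but with the cluster-decomposition parameters randomized so that separation probabilities behave well at every scale simultaneously. First I would fix a geometric sequence of scales $\Delta_i = 2^i$ (for $i$ ranging up to $\log \diam(G)$), and recursively partition each cluster $C$ of strong diameter $\le \Delta_i$ into a central ball $B(x_0, r_0)$ around a chosen root $x_0 \in C$ plus a collection of \emph{petals} around shortest paths from $x_0$ to far boundary vertices. The petal radii are chosen by the usual EEST procedure: the boundary of each petal is drawn uniformly from an interval of width $\Theta(\Delta_i)$, so that any fixed vertex $v$ at distance $s$ from the boundary of $C$ lies at the edge of its own sub-cluster with probability $O(s/\Delta_i)$.

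Second I would assemble the spanning tree by taking, at every level, the spine edges of the shortest paths used to define the petals, together with one connecting edge per petal from its base into the central ball; recursing down to singletons gives a spanning tree $T$ of $G$. A standard invariant of the construction is that for any pair $u,v$, if $\ell$ is the highest level at which $u$ and $v$ are placed in distinct sub-clusters, then $d_T(u,v) = O(\Delta_\ell)$, because going from $u$ up to the level-$\ell$ root and back down to $v$ traverses at most a constant number of spines of total length $O(\Delta_\ell)$.

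Third, for the distortion analysis I would bound, for every vertex $u$ and every target distortion $t$, the number of vertices $v$ for which the pair $(u,v)$ incurs distortion exceeding $t$. Fix $u$, a radius $r$, and a level $\ell$ with $\Delta_\ell \approx t \cdot r$; the bad events are those where some neighbor $v \in B(u,2r) \setminus B(u,r)$ becomes separated from $u$ above level $\ell$. By the petal cutting estimate together with a geometric summation over the scales $\Delta_\ell, \Delta_{\ell+1}, \dots$ above the appropriate threshold, the expected number of such $v$ is $O(|B(u,2r)| / t^2)$. Summing over dyadic shells of radii $r$ and using $\sum_r |B(u,2r)\setminus B(u,r)| \le n$, the total expected number of pairs $(u,v)$ with distortion exceeding $t$ is $O(n^2 / t^2)$, which is the definition of scaling distortion $\gamma(\epsilon) = O(\sqrt{1/\epsilon})$. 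Derandomization follows from the method of conditional expectations applied to the random petal radii.

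The main obstacle is getting the $1/\epsilon$ inside the square root rather than a mere $\log(1/\epsilon)$ or $1/\epsilon$ bound. A naive union bound over the $O(\log n)$ levels contributes an extra logarithmic factor at each vertex, and a level-by-level counting loses a factor of $t$ rather than $t^2$. The quadratic saving requires that, at each vertex $u$, one simultaneously exploits (i) the linear-in-$s/\Delta_i$ cutting probability of EEST petals and (ii) the fact that the number of vertices involved in bad events at scale $\Delta_\ell$ is itself proportional to the shell width, producing a second factor of $1/t$; it is this cascading that gives the tight $\sqrt{1/\epsilon}$ rate, matching the known lower bound for spanning trees.
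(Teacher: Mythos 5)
First, note that the paper does not prove this statement at all: Theorem~\ref{thm:scaling-spanning-trees} is imported verbatim from \cite{ABN15} and used as a black box, so the comparison is with the known proof there, which is a \emph{deterministic} hierarchical star-decomposition in which the cut radii are chosen (by a dedicated partition lemma) to control the number of points cut near the boundary at \emph{all} distance scales simultaneously, followed by a charging argument over pairs.

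Your randomized EEST-style sketch has two genuine gaps. (i) You only claim, for each \emph{fixed} threshold $t$, that the expected number of pairs with distortion exceeding $t$ is $O(n^2/t^2)$. Scaling distortion requires a \emph{single} tree that satisfies the bound for every $\epsilon$ simultaneously, and Markov plus conditional expectations does not deliver this: applying Markov scale-by-scale over dyadic values of $t$ forces summable slack factors, and the natural single-potential fixes lose precisely what you cannot afford (the potential $\sum (d_T/d_G)^q$ has expectation $O(n^2)$ only for fixed $q<2$, yielding $O((1/\epsilon)^{1/q})$, while $q=2$ diverges logarithmically and yields $O(\sqrt{\log n/\epsilon})$). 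Overcoming this is exactly why \cite{ABN15} replaces the per-scale probabilistic argument with a radius chosen to satisfy a simultaneous condition for all annulus widths at once. (ii) Even for fixed $t$, your quantitative claims are not substantiated: the EEST petal construction does not give cut probability $O(s/\Delta_i)$ free of logarithmic factors (that refinement is the much later petal decomposition of Abraham--Neiman, and EEST's cone-growing loses $\log n$ factors which would contaminate $\gamma(\epsilon)$); and even granting it, the direct computation (probability $O(1/t)$ of being separated at a level $\Delta_\ell\gtrsim t\, d(u,v)$) gives only $O(n^2/t)$ expected bad pairs, i.e.\ $\gamma(\epsilon)=O(1/\epsilon)$. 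The ``second factor of $1/t$'' you invoke from shell widths is precisely the crux of the ABN15 charging argument --- relating the set of points cut at small distance from the boundary to the cardinalities of the relevant balls --- and it is asserted rather than proved in your sketch. As written, the argument would establish at best an $\ell_q$-moment bound for $q<2$ or a $\sqrt{1/\epsilon}$ bound degraded by polylogarithmic or $\epsilon^{-\delta}$ factors, not the tight $O(\sqrt{1/\epsilon})$ scaling distortion of the cited theorem.
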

	
	Now we can prove the main result.
	
	\begin{proof}(of \theoremref{thm:light scaling tree})
		Let $H$ be the spanner given by \theoremref{cor:light spanner scaling distortion}. Let $T$ be a spanning tree of $H$ constructed according to \theoremref{thm:scaling-spanning-trees}. By \lemmaref{lem:Spanners-Composition-Lemma},
		$T$ has scaling distortion  $O(\sqrt{1/\epsilon})\cdot\tilde{O}(\log(1/\epsilon))/\rho=\tilde{O}(\sqrt{1/\epsilon})/\rho$ with respect to the distances in $G$. The lightness follows as $\Psi(T)\le\Psi(H)\le 1+\rho$.
		
	\end{proof}

	\paragraph{Random Tree Embedding.} We also derive a result on probabilistic embedding into light spanning trees with scaling distortion. That is, the embedding construct a distribution over spanning tree so that each tree in the support of the distribution is light. In such probabilistic embeddings \cite{bartal1} into a family ${\cal Y}$, each embedding $f=f_Y:X\to Y$ (for some $(Y,d_Y)\in{\cal Y}$) in the support of the distribution is non-contractive, and the distortion of the pair $u,v\in X$ is defined as $\E_Y\big[\frac{d_Y(f(u),f(v))}{d_X(u,v)}\big]$. The prioritized and scaling distortions are defined accordingly. We make use of the following result from \cite{ABN15}.\footnote{The fact the embedding yields coarse scaling distortion is implicit in their proof.}
	\begin{theorem}(\cite{ABN15})\label{thm:distrib_scaling_span_trees}
		Every weighted graph $G$ embeds into a distribution over spanning trees with coarse scaling distortion $\tilde{O}(\log^2(1/\epsilon))$.
	\end{theorem}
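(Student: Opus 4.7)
The theorem is attributed to \cite{ABN15}, and the footnote observes that the coarse scaling bound is implicit in their proof of the $\tilde{O}(\log n)$ expected-distortion probabilistic embedding into spanning trees. My plan is to sketch how one would extract the coarse scaling guarantee from their hierarchical construction, refining the standard analysis to replace the global $\log n$ factor by a local $\log(1/\epsilon)$ factor controlled by ball densities.

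The first step is to recall the structural setup: the embedding is built from a random laminar hierarchy $C_0 \supseteq C_1 \supseteq \dots$ where clusters at level $i$ have diameter roughly $2^i$, produced by a petal-style randomized partitioning. The key probabilistic property at scale $r = 2^i$ is that for any pair $u,v$, the probability of being separated at that level is at most $O\bigl((d_G(u,v)/r) \cdot \log(|B(u,r)|/|B(u,r/2)|)\bigr)$, i.e., the cutting probability is density-sensitive rather than globally $O((d/r)\log n)$. The second step is to invoke the spanning tree conversion used in \cite{ABN15}: connections between sub-clusters in the hierarchy are replaced by actual graph paths, which inflates the per-scale stretch by a $\polylog$ factor in the local cluster size.

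The third and main step is the scaling analysis. Fix $u,v$ that are $\epsilon/2$-far from each other, so $|B(u,d_G(u,v))|\ge\epsilon n/2$ and similarly for $v$. Write $\E[d_T(u,v)]$ as a sum of contributions over scales $r = 2^i$. Scales $r\le d_G(u,v)$ contribute at most $O(d_G(u,v))$ in total via a geometric sum (since $u,v$ are certainly separated there but the diameters shrink). For scales $r > d_G(u,v)$, the total expected overshoot telescopes as
\[
\sum_{r > d_G(u,v)} O\!\left(\tfrac{d_G(u,v)}{r} \log\tfrac{|B(u,r)|}{|B(u,r/2)|}\right)\cdot r \;=\; O\bigl(d_G(u,v)\cdot \log(n/|B(u,d_G(u,v))|)\bigr) \;=\; O\bigl(d_G(u,v)\cdot \log(1/\epsilon)\bigr).
\]
Multiplying by the per-scale $\polylog$ stretch introduced by the spanning tree conversion—which, crucially, also scales with the local cluster size and hence is $\polylog(1/\epsilon)$ for the scales that actually matter—yields expected distortion $\tilde O(\log^2(1/\epsilon))$.

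The main obstacle is ensuring that the spanning-tree conversion's extra $\polylog$ per scale can genuinely be bounded by $\polylog(1/\epsilon)$ rather than $\polylog n$. This requires revisiting the petal decomposition analysis of \cite{ABN15} to verify that the stretch it introduces at scale $r$ depends on $|B(u,O(r))|$ (a local quantity) rather than the total number of vertices, so that the density condition $|B(u,d_G(u,v))|\ge \epsilon n/2$ propagates into the stretch estimate. Once this locality is established, the two $\log(1/\epsilon)$ factors combine cleanly, and the one-sided coarse scaling statement (which is what the theorem asserts) follows because the density argument only needs a lower bound on $|B(u,d_G(u,v))|$ and $|B(v,d_G(u,v))|$ separately.
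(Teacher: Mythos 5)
First, a point of calibration: the paper does not prove \theoremref{thm:distrib_scaling_span_trees} at all --- it is imported as a black box from \cite{ABN15}, with a footnote noting only that the coarse scaling property is implicit in that paper's proof. So there is no in-paper argument to compare against; what can be judged is whether your sketch would stand as a proof of the cited result, and it does not. Your outline captures the right flavor of a scaling analysis (density-sensitive cutting probabilities whose ball-ratio logarithms telescope to $\log\left(n/|B(u,d_G(u,v))|\right)=O(\log(1/\epsilon))$), but the step that produces the second $\log(1/\epsilon)$ factor is exactly the part you leave open: you assert that the extra per-scale stretch ``scales with the local cluster size and hence is $\polylog(1/\epsilon)$,'' and then concede that verifying this locality ``requires revisiting the petal decomposition analysis'' of \cite{ABN15}. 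That locality is the crux of the theorem, not a detail, so as written your argument is a proof plan contingent on an unproven property of someone else's construction. The attribution is also off: the $\tilde{O}(\log^2(1/\epsilon))$ bound stems from the scaling analysis of hierarchical star-decomposition-type constructions (in the lineage of Elkin--Emek--Spielman--Teng), not from petal decompositions (whose worst-case bound is $\tilde{O}(\log n)$, a different trade-off), and the per-scale cut-probability bound $O\bigl((d_G(u,v)/r)\log(|B(u,r)|/|B(u,r/2)|)\bigr)$, centered at $u$ at every scale of the hierarchy, is precisely the density-sensitive guarantee whose establishment is the technical content of \cite{ABN15}; it cannot simply be posited.

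Second, you misstate what the theorem asserts. The paper's definition of coarse scaling distortion is two-sided (both endpoints must be $\epsilon/2$-far from each other), and that is all \theoremref{thm:distrib_scaling_span_trees} claims. The one-sided strengthening you say ``follows'' is strictly stronger; the paper itself only derives one-sided guarantees indirectly, via the prioritized-distortion route of \corollaryref{cor:weak_to_strong_scaling}, and at the cost of a worse dependence on $\epsilon$. Your one-line justification --- that the density argument needs lower bounds on $|B(u,d_G(u,v))|$ and $|B(v,d_G(u,v))|$ ``separately'' --- does not explain how the analysis survives when one of those two lower bounds is simply absent.
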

	We note that the distortion bound on the composition of maps in \lemmaref{lem:Spanners-Composition-Lemma} also holds whenever $g$ is a random embedding, and we measure the scaling expected distortion.
	Thus, following the same lines as in the proof of \theoremref{thm:light scaling tree}, (while using \theoremref{thm:distrib_scaling_span_trees} instead of \theoremref{thm:scaling-spanning-trees}), we obtain the following.
	\begin{theorem}\label{thm:light scaling distirubition}
		For any parameter $0<\rho<1$
		and any weighted graph $G$, there is an embedding of $G$  into a distribution over spanning trees with scaling distortion $\tilde{O}(\log^3(1/\epsilon))/\rho$, such that every tree $T$ in the support has lightness $1+\rho$.
	\end{theorem}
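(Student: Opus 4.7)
(of \theoremref{thm:light scaling distirubition}, proposal)
The plan is to mirror exactly the two-step composition argument used for \theoremref{thm:light scaling tree}, but replace the deterministic spanning tree embedding with a probabilistic one. First, apply \theoremref{cor:light spanner scaling distortion} to obtain a (deterministic) spanner $H$ of $G$ with lightness $1+\rho$ and coarse scaling distortion $\tilde{O}(\log(1/\epsilon))/\rho$ with respect to $d_G$. Since $H$ contains the MST of $G$ (by the construction underlying \theoremref{thm:Prioritized-Spanner} and \theoremref{thm:reduction_lightness}), the MST of $H$ coincides with the MST of $G$. Then apply \theoremref{thm:distrib_scaling_span_trees} to the weighted graph $H$ to get a distribution $\mathcal{D}$ over spanning trees $T$ of $H$, each with (expected) coarse scaling distortion $\tilde{O}(\log^2(1/\epsilon))$ with respect to $d_H$.

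Each $T$ in the support of $\mathcal{D}$ is a spanning tree of $H$, and since $V(H)=V(G)$, it is also a spanning tree of $G$. For lightness, $w(T)\le w(H)\le(1+\rho)\cdot w(\mathrm{MST}(G))$, so $\Psi(T)\le 1+\rho$ deterministically, as required for every tree in the support.

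For the scaling distortion, the plan is to invoke (an adaptation of) \lemmaref{lem:Spanners-Composition-Lemma} where the outer map is a random embedding and distortion is measured in expectation. The key observation is that for any pair $u,v$, since $d_H(u,v)$ is deterministic one has
\[
\frac{\E_{T\sim\mathcal{D}}[d_T(u,v)]}{d_G(u,v)}=\frac{\E_{T\sim\mathcal{D}}[d_T(u,v)]}{d_H(u,v)}\cdot\frac{d_H(u,v)}{d_G(u,v)}.
\]
Letting $A$ be the set of pairs on which the first factor exceeds $\tilde{O}(\log^2(2/\epsilon))$ and $B$ be the set of pairs on which the second exceeds $\tilde{O}(\log(2/\epsilon))/\rho$, by the two scaling guarantees $|A|,|B|\le(\epsilon/2)\binom{n}{2}$, so for all but an $\epsilon$-fraction of pairs the product is $\tilde{O}(\log^3(1/\epsilon))/\rho$.

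The only step that is not entirely routine is checking the probabilistic version of the composition lemma: one must confirm that the original proof of \lemmaref{lem:Spanners-Composition-Lemma} carries through when the outer scaling guarantee is on $\E[d_T(u,v)]/d_H(u,v)$ rather than on a pointwise ratio. This follows immediately from the display above, since the inner map $f:G\to H$ is deterministic, so the randomness affects only the outer factor and a union bound over the two ``bad'' sets of pairs suffices. The lightness is deterministic because the whole distribution is supported on subgraphs of $H$, and \theoremref{thm:distrib_scaling_span_trees} produces only spanning trees of the input graph.
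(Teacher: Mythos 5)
Your proposal is correct and follows essentially the same route as the paper: take the light spanner of \theoremref{cor:light spanner scaling distortion}, apply \theoremref{thm:distrib_scaling_span_trees} to it, bound lightness by $w(T)\le w(H)$, and observe that \lemmaref{lem:Spanners-Composition-Lemma} extends to a random outer map when distortion is measured in expectation (which is exactly the paper's remark, and your factorization through the deterministic $d_H$ is the right justification).
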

	It follows from \lemmaref{lemma:scaling-q-norm} that the $\ell_q$-distortion is $O(1/\rho)$, for every fixed $q\ge 1$.

	\section{Lower Bound on the Trade-off between Lightness and Average Distortion}\label{sec: lower bound}

	In this section, we give an example of a graph for which any spanner with lightness $1+\rho$ has average distortion $\Omega(1/\rho)$ (of course this bound holds for the $\ell_q$-distortion as well). This shows that our results
	are tight \footnote{We also mention that in general the average distortion of a spanner cannot be arbitrarily close to 1, unless the spanner is extremely dense. E.g., when $G$ is a complete graph, any spanner with lightness at most $n/4$ will have average distortion at least $3/2$.}.
	
	\begin{lemma}\label{lem:lwr_bnd}
		For any $n \ge 32$ and $\rho\in [1/n,1/32]$, there is a graph $G$ on $n+1$ vertices such that any spanner $H$ of $G$ with lightness at most $1+\rho$ has average distortion at least $\Omega\left(1/\rho\right)$.
	\end{lemma}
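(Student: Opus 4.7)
The plan is to construct $G$ as a \emph{path plus star} on $n+1$ vertices $v_0,v_1,\ldots,v_n$: take a path edge $\{v_i,v_{i+1}\}$ of weight $\rho$ for each $i\in\{0,\ldots,n-1\}$, and a star edge $\{v_0,v_i\}$ of weight $1$ for each $i\in\{1,\ldots,n\}$, breaking ties infinitesimally so that the MST $T$ is the path, of weight $w(T)=n\rho$. Writing $k=\lceil 1/\rho\rceil\in[32,n]$, the graph metric satisfies $d_G(v_0,v_i)=1$ for every $i\ge k$ (the star edge beats the path) and $d_G(v_i,v_j)=2$ for every $i,j\ge k$ with $|i-j|\ge 2k$ (the two-hop route via $v_0$ beats the direct path). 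The average distortion lower bound will come entirely from these ``far'' pairs, of which there are $\Omega(n^2)$ when $k\lesssim n$.

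Fix any spanner $H$ of lightness at most $1+\rho$; the first step is to bound its ``star budget.'' Let $s$ be the number of star edges and $t$ the number of missing path edges in $H$. The weight constraint $(n-t)\rho+s\le (1+\rho)n\rho$ gives $s\le n\rho^2+t\rho$, and connectivity (each missing path edge creates a segment whose vertices can only reach $v_0$ through at least one star edge) forces $s\ge t$. Combining these yields $t,\,s=O(n\rho^2)=O(n/k^2)$, so in particular $|S|:=O(n/k^2)$, where $S\subseteq\{v_1,\ldots,v_n\}$ is the set of star-edge endpoints in $H$.

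Next I would lower-bound $d_H(v_i,v_j)$ for the ``far and deep'' pairs. Since $|S|$ points partition the path into at most $|S|+1$ gaps, a convexity argument (for gap sizes $\ell_0,\ldots,\ell_{|S|}$ summing to $n$, $\sum_i d_{\mathrm{path}}(v_i,S)\ge \tfrac14\sum_j\ell_j^2\ge n^2/(4(|S|+1))=\Omega(nk^2)$) shows that regardless of how the adversary places $S$, a constant fraction of vertices $v_i$ satisfy $d_{\mathrm{path}}(v_i,S)=\Omega(k^2)$, hence $d_H(v_i,v_0)\ge 1+\rho\cdot\Omega(k^2)=\Omega(k)$. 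For a pair $(v_i,v_j)$ with $i,j\ge 2k$, $|i-j|\gtrsim k^2$, and both $v_i,v_j$ deep in their respective gaps, any walk in $H$ from $v_i$ to $v_j$ either stays on the (possibly broken) path and costs $|i-j|\rho=\Omega(k)$, or else passes through $v_0$ and costs $d_H(v_i,v_0)+d_H(v_0,v_j)=\Omega(k)$. Therefore $d_H(v_i,v_j)=\Omega(k)$, while $d_G(v_i,v_j)=2$, giving per-pair distortion $\Omega(k)=\Omega(1/\rho)$; averaging over the $\Omega(n^2)$ such pairs and dividing by $\binom{n+1}{2}$ yields average distortion $\Omega(1/\rho)$.

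The main obstacle is promoting the convexity bound from \emph{sums} to \emph{many pairs}: one needs $\Omega(n^2)$ ordered pairs $(v_i,v_j)$ with both endpoints simultaneously deep in a long gap, not merely the sum bound. This can be handled by a two-level argument: the inequality $\sum_j\ell_j^2\ge n^2/(|S|+1)$ forces a constant fraction of the total mass $n$ to lie in gaps of size $\Omega(k^2)$, and within any such gap at least a constant fraction of the vertices are at path-distance $\Omega(k^2)$ from both endpoints of the gap; pairing two such deep vertices lying in different long gaps gives the required $\Omega(n^2)$ bad pairs. A secondary technical issue is the regime $\rho$ close to $1/n$ (where $k^2\gtrsim n$ and no pair satisfies $|i-j|\ge k^2$); there the weights can be rescaled, replacing $(\rho,1)$ by $(w_p,w_s)$ with $w_s/w_p=\Theta(n\rho)$ so that ``far'' pairs do exist in $\{v_0,\ldots,v_n\}$, and the argument goes through with the star budget becoming $O(1)$ stars rather than $O(n/k^2)$.
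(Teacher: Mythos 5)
Your approach is correct in outline but genuinely different from the paper's. The paper takes $G$ to be the \emph{complete} graph on $v_0,\dots,v_n$ with path edges of weight $1$ and all remaining edges of weight $2$; since that metric has diameter $2$, the lightness budget immediately caps the number of weight-$2$ edges of $H$ at $\lceil\rho n\rceil$, and a short ball-counting argument (a vertex far from every weight-$2$ edge has only $O(1/\rho)$ vertices within $H$-distance $\delta=\frac{1}{32\rho}$) shows that a constant fraction of all pairs have distortion at least $\delta/2$ --- uniformly over the whole range $\rho\in[1/n,1/32]$, with no positional analysis. Your path-plus-star instance instead keeps $G$ sparse ($O(n)$ edges), and your budget bound $s=O(n\rho^2)$ and the reduction of bad pairs to ``deep'' vertices are sound, but the price is exactly the bookkeeping you flag: the claim that a constant fraction of vertices is $\Omega(k^2)$-deep does \emph{not} follow from the convexity/sum bound alone (that sum can be carried by $O(k^2)$ vertices) and is only justified by your two-level fix (few gaps, hence little mass in gaps shorter than $\Theta(k^2)$); moreover you should count \emph{all} pairs of deep vertices at separation at least $ck^2$ rather than only pairs in distinct long gaps, since when a single long gap carries most of the mass the cross-gap count degenerates, whereas the unified count gives ${D\choose 2}-O(nk^2)=\Omega(n^2)$ whenever $k^2\le \epsilon_0 n$; and ``deep'' must mean far in hops from $S\cup\{v_0\}$, so that both the pure-path route and every route through $v_0$ cost $\Omega(k)$. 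Finally, your main construction really only covers $\rho=\Omega(1/\sqrt n)$, as you note; the rescaling $w_s/w_p=\Theta(n\rho)$ does repair the small-$\rho$ regime, and in fact if you adopt it for the \emph{entire} range the budget forces only $O(1)$ star edges and missing path edges (and the diameter of $G$ becomes $2w_s$), which collapses your proof to a few lines --- essentially recovering the simplicity of the paper's argument while retaining a sparse hard instance.
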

	\begin{proof}

		We define the graph $G=(V,E)$ as follows.
		Denote $V=\left\{ v_{0},v_{1},\dots,v_{n}\right\}$, $E={V \choose 2}$, and the weight function $w$ is defined as follows.
		\[
		w(\left\{ v_{i},v_{j}\right\} )=\begin{cases}
		1 & \mbox{if }\left|i-j\right|=1\\
		2 & \mbox{otherwise }.
		\end{cases}
		\]
		I.e., $G$ is a complete graph of size $n+1$, where the edges $\{v_i,v_{i+1}\}$ have unit weight and induce a  path of length $n$, and all non-path edges have weight $2$. Clearly, the path is the MST of $G$ of weight $n$.
		Let $k=\lceil \rho n \rceil$.
		Let $H$ be some spanner of $G$ with lightness at most $1+\rho \leq \frac{n+k}{n}$, in particular, $w(H)\le n+k$. Clearly $H$ has at least $n$ edges (to be connected). Let $q$ be the number of edges of weight $2$ contained in $H$. Then $w(H)\ge (n-q)\cdot 1 + q\cdot 2=n+q$. Therefore $q \leq k$.
		
		Let $S$ be the set of vertices which are incident on an edge of weight $2$ in $H$. Then $|S|\le 2q \leq 2k$. Let $\delta=\frac{1}{32\rho}$. For any $v\in S$, let $N_v\subseteq V$ be the set of vertices that are connected to $v$ via a path of length at most $\delta$ in $H$, such that this path consists of weight 1 edges only. Necessarily, for any $v\in S$, $|N_v|\le 2\delta+1$. Let $N=\bigcup_{v\in S}N_{v}$, it holds that $|N|\le2k\cdot(2\delta+1) \leq 4\rho n (\frac{1}{16\rho} + 1) \leq \frac{n}{4} + \frac{n}{8} = \frac{3n}{8}$. Let $\bar{N}=V\setminus N$.
		
		Consider $u\in\bar{N}$. By definition of $N$ every weight $2$ edge is further than $\delta$ steps away from $u$ in $H$. It follows that there are at most $2\delta+1$ vertices within distance at most $\delta$ from $u$ (in $H$). Let $F_u=\{v\in V~:~ d_H(u,v)>\delta\}$. It follows that $|F_u|\ge n-2\delta-1$. Note that for any $v\in F_u$, the distortion of the pair $\{u,v\}$ is at least $\frac{\delta}{2}$.  Hence, we obtain that
		\begin{eqnarray*}
			\sum_{\left\{ v,u\right\} \in{V \choose 2}}\frac{d_{H}\left(v,u\right)}{d_{G}\left(v,u\right)}&\ge&\frac{1}{2}\sum_{u\in\bar{N}}\sum_{v\in F_{u}}\frac{d_{H}\left(v,u\right)}{d_{G}\left(v,u\right)}\\
			&\ge&\frac{5n}{16}\cdot\left(n-2\delta-1\right)\cdot\frac{\delta}{2}\\
			&\ge&\frac{5n}{16}\cdot\frac{7n}{8}\cdot\frac{1}{64\rho}\ .
		\end{eqnarray*}
		Finally, the average distortion is bounded as follows.
		\begin{align*}
		\dist_{1}(H) & =\frac{1}{{n+1 \choose 2}}\sum_{\left\{ v,u\right\} \in{V \choose 2}}\frac{d_{H}\left(v,u\right)}{d_{G}\left(v,u\right)}\\
		& \ge\frac{n}{n+1}\cdot\frac{35}{64}\cdot\frac{1}{64\rho}\\
		& \ge\frac{1}{128\rho}~.\qedhere
		\end{align*}
		
	\end{proof}

	\section{Acknowledgments} We are grateful to Michael Elkin and Shiri Chechik for fruitful discussions.

	{\small
		\bibliographystyle{alpha}
		\bibliography{bib}

\newcommand{\etalchar}[1]{$^{#1}$}
\begin{thebibliography}{AKPW95}

\bibitem[ABC{\etalchar{+}}05]{ABCD05}
Ittai Abraham, Yair Bartal, Hubert~T.{-}H. Chan, Kedar Dhamdhere, Anupam Gupta,
  Jon~M. Kleinberg, Ofer Neiman, and Aleksandrs Slivkins.
\newblock Metric embeddings with relaxed guarantees.
\newblock In {\em 46th Annual {IEEE} Symposium on Foundations of Computer
  Science {(FOCS} 2005), 23-25 October 2005, Pittsburgh, PA, USA, Proceedings},
  pages 83--100, 2005.

\bibitem[ABN08]{ABN08}
Ittai Abraham, Yair Bartal, and Ofer Neiman.
\newblock Nearly tight low stretch spanning trees.
\newblock In {\em Proceedings of the 49th Annual IEEE Symposium on Foundations
  of Computer Science}, FOCS '08, pages 781--790, Washington, DC, USA, 2008.
  IEEE Computer Society.

\bibitem[ABN11]{ABN11}
Ittai Abraham, Yair Bartal, and Ofer Neiman.
\newblock Advances in metric embedding theory.
\newblock {\em Advances in Mathematics}, 228(6):3026 -- 3126, 2011.

\bibitem[ABN15]{ABN15}
Ittai Abraham, Yair Bartal, and Ofer Neiman.
\newblock Embedding metrics into ultrametrics and graphs into spanning trees
  with constant average distortion.
\newblock {\em {SIAM} J. Comput.}, 44(1):160--192, 2015.

\bibitem[ABNS14]{ABNS14}
Ittai Abraham, Yair Bartal, Ofer Neiman, and Leonard~J. Schulman.
\newblock Volume in general metric spaces.
\newblock {\em Discrete {\&} Computational Geometry}, 52(2):366--389, 2014.

\bibitem[ABP92]{ABP92}
B.~Awerbuch, A.~Baratz, and D.~Peleg.
\newblock Efficient broadcast and light-weight spanners.
\newblock Technical Report CS92-22, The Weizmann Institute of Science, Rehovot,
  Israel., 1992.

\bibitem[AKPW95]{AKPW95}
Noga Alon, Richard~M. Karp, David Peleg, and Douglas West.
\newblock A graph-theoretic game and its application to the $k$-server problem.
\newblock {\em SIAM J. Comput.}, 24(1):78--100, 1995.

\bibitem[AN12]{AN12}
Ittai Abraham and Ofer Neiman.
\newblock Using petal-decompositions to build a low stretch spanning tree.
\newblock In {\em Proceedings of the Forty-fourth Annual ACM Symposium on
  Theory of Computing}, STOC '12, pages 395--406, New York, NY, USA, 2012. ACM.

\bibitem[Bar96]{bartal1}
Yair Bartal.
\newblock Probabilistic approximation of metric spaces and its algorithmic
  applications.
\newblock In {\em Proceedings of the 37th Annual Symposium on Foundations of
  Computer Science}, pages 184--, Washington, DC, USA, 1996. IEEE Computer
  Society.

\bibitem[Bar98]{bartal2}
Yair Bartal.
\newblock On approximating arbitrary metrices by tree metrics.
\newblock In {\em Proceedings of the thirtieth annual ACM symposium on Theory
  of computing}, STOC '98, pages 161--168, New York, NY, USA, 1998. ACM.

\bibitem[Bar04]{bartal3}
Yair Bartal.
\newblock Graph decomposition lemmas and their role in metric embedding
  methods.
\newblock In {\em Algorithms - {ESA} 2004, 12th Annual European Symposium,
  Bergen, Norway, September 14-17, 2004, Proceedings}, pages 89--97, 2004.

\bibitem[BFN16]{BFN16}
Yair Bartal, Arnold Filtser, and Ofer Neiman.
\newblock On notions of distortion and an almost minimum spanning tree with
  constant average distortion.
\newblock In {\em Proceedings of the Twenty-seventh Annual ACM-SIAM Symposium
  on Discrete Algorithms}, SODA '16, pages 873--882, Philadelphia, PA, USA,
  2016. Society for Industrial and Applied Mathematics.

\bibitem[BFN17]{BFN17}
Yair Bartal, Nova Fandina, and Ofer Neiman.
\newblock On moment analysis of metric embedding and its application in
  dimensionality reduction.
\newblock {\em Manuscript}, 2017.

\bibitem[CDG06]{CDG06}
T.-H.~Hubert Chan, Michael Dinitz, and Anupam Gupta.
\newblock Spanners with slack.
\newblock In {\em Proceedings of the 14th Conference on Annual European
  Symposium - Volume 14}, ESA'06, pages 196--207, London, UK, UK, 2006.
  Springer-Verlag.

\bibitem[CDNS92]{CDNS92}
Barun Chandra, Gautam Das, Giri Narasimhan, and Jos{\'e} Soares.
\newblock New sparseness results on graph spanners.
\newblock In {\em Proceedings of the Eighth Annual Symposium on Computational
  Geometry}, SCG '92, pages 192--201, New York, NY, USA, 1992. ACM.

\bibitem[CW16]{CW16}
Shiri Chechik and Christian Wulff{-}Nilsen.
\newblock Near-optimal light spanners.
\newblock In {\em Proceedings of the Twenty-Seventh Annual {ACM-SIAM} Symposium
  on Discrete Algorithms, {SODA} 2016, Arlington, VA, USA, January 10-12,
  2016}, pages 883--892, 2016.

\bibitem[EEST08]{EEST05}
Michael Elkin, Yuval Emek, Daniel~A. Spielman, and Shang-Hua Teng.
\newblock Lower-stretch spanning trees.
\newblock {\em SIAM Journal on Computing}, 38(2):608--628, 2008.

\bibitem[EFN15a]{EFN15}
Michael Elkin, Arnold Filtser, and Ofer Neiman.
\newblock Prioritized metric structures and embedding.
\newblock In {\em Proceedings of the 47th ACM Symposium on Theory of
  Computing}, STOC, 2015.

\bibitem[EFN15b]{EFN15T}
Michael Elkin, Arnold Filtser, and Ofer Neiman.
\newblock Terminal embeddings.
\newblock In {\em Approximation, Randomization, and Combinatorial Optimization.
  Algorithms and Techniques, {APPROX/RANDOM} 2015, August 24-26, 2015,
  Princeton, NJ, {USA}}, pages 242--264, 2015.

\bibitem[ENS14]{ENS14}
Michael Elkin, Ofer Neiman, and Shay Solomon.
\newblock Light spanners.
\newblock In {\em Automata, Languages, and Programming - 41st International
  Colloquium, {ICALP} 2014, Copenhagen, Denmark, July 8-11, 2014, Proceedings,
  Part {I}}, pages 442--452, 2014.

\bibitem[FRT03]{FRT03}
Jittat Fakcharoenphol, Satish Rao, and Kunal Talwar.
\newblock A tight bound on approximating arbitrary metrics by tree metrics.
\newblock In {\em Proceedings of the thirty-fifth annual ACM symposium on
  Theory of computing}, STOC '03, pages 448--455, New York, NY, USA, 2003. ACM.

\bibitem[FS16]{FS16}
Arnold Filtser and Shay Solomon.
\newblock The greedy spanner is existentially optimal.
\newblock In {\em Proceedings of the 2016 {ACM} Symposium on Principles of
  Distributed Computing, {PODC} 2016, Chicago, IL, USA, July 25-28, 2016},
  pages 9--17, 2016.

\bibitem[GP68]{GilbertP68}
E.~N. Gilbert and H.~O. Pollak.
\newblock Steiner minimal trees.
\newblock {\em SIAM Journal on Applied Mathematics}, 16(1):1--29, Jan 1968.

\bibitem[KRY93]{KRY93}
Samir Khuller, Balaji Raghavachari, and Neal~E. Young.
\newblock Balancing minimum spanning and shortest path trees.
\newblock In {\em SODA}, pages 243--250, 1993.

\bibitem[KSW09]{KSW04}
Jon Kleinberg, Aleksandrs Slivkins, and Tom Wexler.
\newblock Triangulation and embedding using small sets of beacons.
\newblock {\em Journal of the ACM}, 56(6):32:1--32:37, September 2009.

\end{thebibliography}
	}

\end{document}